\def\a{\alpha}
\def\d{\delta}
\def\s{\sigma}
\def\vp{\varphi}
\def\ot{\otimes}
\def\rt{\triangleright}
\def\lt{\triangleleft}
\def\Ad{\mathop{\rm Ad}\nolimits}
\def\ad{\mathop{\rm ad}\nolimits}
\newcommand{\G}[1]{\mathfrak{#1}}
\newcommand{\B}[1]{\mathbb{#1}}
\numberwithin{equation}{section}
\newtheorem{theorem}{Theorem}[section]
\newtheorem{proposition}[theorem]{Proposition}
\newtheorem{corollary}[theorem]{Corollary}
\theoremstyle{definition}
\newtheorem{remark}[theorem]{Remark}
\begin{document}
\title{Second Order Lagrangian Dynamics On Double Cross Product Groups}
\author{O\u{g}ul Esen}
\address{Department of Mathematics, Gebze Technical University,  41400 Gebze-Kocaeli, Turkey}
\email{oesen@gtu.edu.tr}

\author{Mahmut Kudeyt}
\address{Department of Mathematics, Gebze Technical University,  41400 Gebze-Kocaeli, Turkey}
\email{mahmutkudeyt@gmail.com}

\author{Serkan Sütlü}
\address{Department of Mathematics, I\c{s}ik University, 34980 \c{S}ile-\.{I}stanbul, Turkey}
\email{serkan.sutlu@isikun.edu.tr}
\date{}

\begin{abstract}
We  observe that the iterated tangent group of a Lie group may be realized as a double cross product of the 2nd order tangent group, with the Lie algebra of the base Lie group. Based on this observation, we derive the 2nd order Euler-Lagrange equations on the 2nd order tangent group from the 1st order Euler-Lagrange equations on the iterated tangent group. We also present in detail the 2nd order Lagrangian dynamics on the 2nd order tangent group of a double cross product group.

\bigskip

\noindent\footnotesize{\textbf{MSC 2010:} 22E70, 70H50.} \newline
\noindent\footnotesize{\textbf{Key words:} Euler-Poincar\'{e} equation, second order tangent bundles, matched pairs of Lie groups.}
\end{abstract}

\maketitle

\section{Introduction}

\onehalfspacing

The symmetry of a dynamical equation is defined as the invariance of the system under a (Lie) group action, \cite{de2011methods,  holm2008geometric, libermann2012symplectic, Ol93}. On the other hand, the reduction of Lagrangian systems under symmetries is one of the main interests of  geometric mechanics, \cite{n2001lagrangian}. More precisely, the reduction of the Euler-Lagrange equations, on the tangent bundle $TQ$ of the configuration space $Q$, reduces under the symmetry group (say $G$) action, to the Lagrange-Poincar\'{e} equations over the space $TQ/G$ of orbits, \cite{marsden1993lagrangian}. 

The configuration spaces of many physical systems such as the rigid bodies, or fluid and plasma theories are Lie groups, \cite{abraham1978foundations, arnold1989mathematical, marsden1982group,vizman2008geodesic}, as we shall confine ourselves to within the present paper. In the case that the configuration space of the system is a Lie group, say $K$, the Lagrangian dynamics on the (left trivialized) tangent bundle $TK$ is determined by the the Euler-Lagrange equations 
\begin{equation}\label{EL-on-TK}
\frac{d}{dt}\frac{\delta\mathfrak{L}}{\delta\a}   = T_e^\ast L_a \frac{\delta\mathfrak{L}}{\delta a} -\ad_{\a}^{\ast}%
\frac{\delta\mathfrak{L}}{\delta\a}
\end{equation}
generated by a Lagrangian $\mathfrak{L}:TK\to \B{R}$, $\mathfrak{L}=\mathfrak{L}(a,\a)$, where $L$ refers to the left regular representation of $K$ on itself, and $\ad^\ast$ the (left) coadjoint action of $K$ on $\G{K}^\ast$. Accordingly, the symmetry of $K$ reduces \eqref{EL-on-TK} to the Euler-Poincar\'e equations
\begin{equation} \label{ep}
\frac{d}{dt}\frac{\delta\mathfrak{L}}{\delta\a}   =-\ad_{\a}^{\ast}%
\frac{\delta\mathfrak{L}}{\delta\a}
\end{equation}
associated to the reduced Lagrangian function(al) $\mathfrak{L}:\G{K}\to \B{R}$, $\mathfrak{L} = \mathfrak{L} (\a)$, on the Lie algebra $\G{K}$ of $K$, \cite{MarsdenRatiu-book}.

One feasible strategy to study dynamics (in the Lagrangian, or Hamiltonian framework) is to realize the configuration space as a semi-direct product, \cite{n2001lagrangian,MaMiOrPeRa07,MarsRatiWein84}. This, then, allows to realize the dynamical system under investigation as a composition of two simpler subsystems, one of which effecting the other. In the presence of such a nontrivial action, of one subsystem onto the other, the joint system (under investigation) becomes more than a mere composition of its building blocks. More precisely, the (Euler-Poincar\'{e}) equations of motion of the joint system contains additional terms which do not exist in the equations of motions of the individual subsystems. As such, it is possible to capture a dynamics governed by a rather complicated set of (Euler-Poincar\'{e}) equations in terms of two much simpler dynamics governed by much simpler equations. This strategy has proved itself to be very productive. Indeed, the heavy top, ideal compressible fluids, MHD, and many other physical systems have been successfully studied within the semi-direct product framework, \cite{holm1998euler,Rati80}. 

However, although there are many examples that fit into the semi-direct product framework, it is more realistic to expect the building blocks of a given physical system to be in mutual interaction, rather than a one-way relationship. As a result, the semi-direct product strategy has been taken a step forward recently in \cite{EsenSutl17}, by considering the configuration space as a double cross product \cite{Maji90,Majid-book}. 

A group $K$ is called a ``double cross product'' of a pair of groups $(G,H)$ if $K\cong G\times H$ as sets. Then, the group structure on $K$ is determined by those in $G$ and $H$, together with their mutual actions $\rt:H\times G\to G$ of $H$ on $G$, and $\lt:H\times G \to H$ of $G$ on $H$. This group structure built on $G\times H$ is denoted by $G\bowtie H$, in order to emphasize the mutual actions. If, in particular, the (right) action of $G$ on $H$, or the (left) action of $H$ on $G$ is trivial, then $K$ becomes a semi-direct product of the pair $(G,H)$; namely $K\cong G \rtimes H$, or $K \cong G\ltimes H$ respectively. Furthermore, the Lie algebra $\G{K}$ of a double cross product group $K\cong G\bowtie H$ is a ``double cross sum'' Lie algebra, namely; $\G{K}\cong \G{g}\oplus \G{h}$, with $\G{g}$ and $\G{h}$ being the Lie algebras of $G$ and $H$ respectively. Just as in the group case, this time the Lie algebra structure on $\G{K}$ is completely determined by those in $\G{g}$ and $\G{h}$, and their mutual actions $\rt:\G{h}\ot \G{g}\to\G{g}$ and $\lt:\G{h}\ot \G{g}\to\G{h}$. Just as in the group level, the Lie algebra thus constructed on $\G{g}\oplus \G{h}$ is also denoted by $\G{g}\bowtie \G{h}$ to emphasize the mutual actions.

Just as it is in the semi-direct product theory, realizing a dynamical system as a double cross product means to consider the system under investigation as a composition of two subsystems; but this time allowing a mutual interaction between them, wherein the novelty of this approach lies. In addition to the individual dynamics of the two subsystems, this time their (nontrivial) actions on each other contributes to the dynamics of the joint system as well. 

In the level of equations of motion, in turn, the contributions of the subsystems into the total dynamics of the joint system manifest themselves in the (Euler-Poincar\'{e}) equations. As such, the realization of the configuration space of a given dynamical system as a double cross product gives a larger (compared to the semi-direct product realization) number of terms which are associated to the mutual actions of the subsystems. It is precisely this feature that gives the double cross product theory a non-trivial advantage over the semi-direct product theory. Namely; it recovers the semi-direct product theory simply by considering one of the actions of the subsystems onto the other to be trivial, and it thus qualifies to encompass the examples that fall beyond the semi-direct product theory.

More precisely; given a double cross product group $G\bowtie H$, the Euler-Lagrange equations associated to a Lagrangian $\mathfrak{L}:T(G\bowtie H) \to \B{R}$, $\mathfrak{L}=\mathfrak{L}(g,h;\xi,\eta)$ on the (left trivialized) tangent bundle $T(G\bowtie H) \cong (G\bowtie H)\ltimes (\G{g}\bowtie \G{h})$ was obtained in \cite{EsenSutl17} as
\begin{align*}
& \frac{d}{dt}\frac{\delta\mathfrak{L}}{\delta\xi}   = \underset{H\, \text{action on}\, \G{g}^\ast}{\underbrace{\Big(T_e^\ast L_g \frac{\delta\mathfrak{L}}{\delta g}\Big)\overset{\ast}{\lt} h}} + T_e^\ast \s_h \frac{\delta\mathfrak{L}}{\delta h} -\ad_{\xi}^{\ast}%
\frac{\delta\mathfrak{L}}{\delta\xi} + \underset{\G{h}\, \text{action on}\, \G{g}^\ast}{\underbrace{\frac{\delta\mathfrak{L}}{\delta \xi}\overset{\ast}{\lt} \eta}} + \G{a}^\ast_\eta\frac{\delta\mathfrak{L}}{\delta \eta} \\
& \frac{d}{dt}\frac{\delta\mathfrak{L}}{\delta\eta}   = T_e^\ast L_h \frac{\delta\mathfrak{L}}{\delta h} - \ad_{\eta}^{\ast}%
\frac{\delta\mathfrak{L}}{\delta\eta} - \underset{\G{g}\, \text{action on}\, \G{h}^\ast}{\underbrace{\xi \overset{\ast}{\rt} \frac{\delta\mathfrak{L}}{\delta\eta}}} - \G{b}^\ast_\xi\frac{\delta\mathfrak{L}}{\delta \xi}
\end{align*}
where, for any $(g,h)\in G\bowtie H$ and any $(\xi,\eta)\in \G{g}\bowtie \G{h}$, $\s_h:G\to H$ is given by $\s_h(g) := h\lt g$, $\G{a}_\eta^\ast$ is the transpose of $\G{a}_\eta:\G{g} \to \G{h}$, $\G{a}_\eta(\xi):=\eta\lt\xi$, and similarly $\G{b}_\xi^\ast$ is the transpose of $\G{b}_\xi:\G{h} \to \G{g}$ given by $\G{b}_\xi(\eta):=\eta\rt\xi$. Accordingly, the reduction with respect to the $G\bowtie H$ action yielded the Euler-Poincar\'e equations 
\begin{align}
\begin{split}\label{mEP-1}
\frac{d}{dt}\frac{\delta\mathfrak{L}}{\delta\xi}  &  =-\ad_{\xi}^{\ast}%
\frac{\delta\mathfrak{L}}{\delta\xi}+\frac{\delta\mathfrak{L}}{\delta\xi
}\overset{\ast}{\vartriangleleft}\eta+\mathfrak{a}_{\eta}^{\ast}\frac
{\delta\mathfrak{L}}{\delta\eta}\\
\frac{d}{dt}\frac{\delta\mathfrak{L}}{\delta\eta}  &  =-\ad_{\eta}^{\ast}%
\frac{\delta\mathfrak{L}}{\delta\eta}-\xi\overset{\ast}{\vartriangleright
}\frac{\delta\mathfrak{L}}{\delta\eta}-\mathfrak{b}_{\xi}^{\ast}\frac
{\delta\mathfrak{L}}{\delta\xi}
\end{split}
\end{align}
generated by the reduced Lagrangian $\mathfrak{L}:\G{g}\bowtie \G{h} \to \B{R}$, $\mathfrak{L}=\mathfrak{L}(\xi,\eta)$ on the Lie algebra $\G{g}\bowtie \G{h} $ of $G\bowtie H$.

Let us also note that the double cross product theory has already found applications in plasma theory \cite{esen2017hamiltonian}, and in thermodynamical processes \cite{esen2019,pavelka2018multiscale, vagner2019multiscale}. On the other hand, the application of the theory to the Hamiltonian and Lie-Poisson dynamics was achieved successfully in \cite{EsSu16}, whereas the discrete systems in the level of Lie groupoids was studied (from the point of view of the double cross product Lie groupoids) in \cite{esen2018matched}. Finally, even though the double cross product dynamics provides an answer to the coupling problem of Lagrangian systems, even extending the semi-direct product theory, it is far from being the complete solution; since there are Lagrangian theories other then the Euler-Poincar\'{e} dynamics. 

Now, the Lagrange equations are second order differential equations. Therefore, in order to recast a higher order differential equation in the Lagrangian framework, one needs to study the system over the higher order tangent bundles of the configuration space. In this case, the Lagrangian function depends on acceleration (and possibly the terms of higher order) in addition to the position and the velocity, \cite{de2011generalized}. The study of the higher order Lagrangian formalism, and its Hamiltonian counterpart, goes back to 1850's \cite{ostrogradsky1850memoires}. The reduction of the higher order Lagrangian systems under symmetries, on the other hand, have been investigated only recently in \cite{colombo2014higher,de1994symplectic,gay2011higher}. In particular, the higher order Euler-Poincar\'e equations associated to a (reduced) Lagrangian $\mathfrak{L} = \mathfrak{L}(\xi,\dot{\xi},\ldots, \xi^{(k-1)})$ on $T^kG/G \cong \G{g}^{\oplus\,k}$ was obtained as
\begin{equation}
\left( \frac{d}{dt}+\ad_{\xi}^{\ast}\right)  \left(\sum_{j=0}^{k-1}\,(-1)^j\,\frac{d^j}{dt^j}\frac{\d\mathfrak{L}}{\d \xi^{(j)}}\right)=0
	\label{soep>1}
\end{equation}
in \cite{gay2011higher}. In the case of $k=2$, as we shall study below, \eqref{soep>1} appears to be
	\begin{equation}
	\left( \frac{d}{dt}+\ad_{\xi}^{\ast}\right) \left( \frac{\delta \G{L}}{\delta \xi}%
	-\frac{d}{dt}\left( \frac{\delta \G{L}}{\delta\dot{\xi}}\right) \right) =0.
	\label{soep-1}
	\end{equation}
Taking a double cross product group $G\bowtie H$, we shall observe in detail how the 2nd order Euler-Poincar\'e equations \eqref{soep-1} is built on those on $G$, and on $H$, and the additional terms associated to the mutual actions of $G$ and $H$ on each other.

One other advantage of the double cross product theory is the theory's being a feasible avenue to study the reduction theory. As a result, based on the observation that
\[
TTG \cong \G{g}\bowtie T^2G
\]
we shall be able to derive the 2nd order Euler-Lagrange equations on $T^2G$ from the (1st order) Euler-Lagrange equations on $TTG$ as a result of the reduction with respect to the (left) action of $\G{g}$. The (1st order) Euler-Lagrange equations on $TTG$, in turn, may be written at once from those on the tangent bundle of a double cross product group as was calculated in \cite{EsenSutl17} simply by considering $TG \cong G\ltimes \G{g}$.

\subsection*{Organization}~

The organization of the paper is as follows. 

In the following Section \ref{Sec-mp}, we review the matched pairs of Lie groups, and matched pairs of Lie algebras. More precisely; Subsection \ref{Sec-mplg} contains the fundamentals of matched pairs of Lie groups, while Subsection \ref{subsect-matched-Lie-alg} is about the matched pairs of Lie algebras.

The main objects of study of the present article is introduced and studied in Section \ref{Sec-mpitg}. In Subsection \ref{subsect-tangent-group} we present the (1st order) tangent group $TG$, and its double cross product decomposition. In Subsection \ref{sec-sotg}, on the other hand, we recall the 2nd order tangent group $T^2G$, its realization as a 2-cocycle extension, and its double cross product decomposition. Finally, in Subsection \ref{itg}, we present the iterated tangent group $TTG$, and more importantly, its double cross product decomposition into $T^2G$ and $\G{g}$. 

Section \ref{mpcont}, is the section where the main results of the paper lie. Subsection \ref{fold-sec} contains a brief review of the 1st order (matched) Lagrangian dynamics. Then, in Subsection \ref{sold-sec}, there comes the 2nd order Euler-Lagrange equations on the 2nd order tangent group, as a result of the reduction of the 1st order Lagrangian dynamics on the iterated tangent bundle. Finally, we present the 2nd order Euler-Lagrange equations on the 2nd order tangent group of a double cross product group in Subsection \ref{msoELe}.

The illustrations, of the calculations done in Subsection \ref{msoELe}, comes in Section \ref{sect-illustrations}. In Subsection \ref{subsect-2-splines}, we focus on Riemannian 2-splines, while in Subsection \ref{subsect-2nd-order-3D} we discuss $3D$-systems.

\subsection*{Notations and Conventions}~

Throughout the text $G$ and $H$ will stand for Lie groups, with Lie algebras $\G{g}$ and $\G{h}$ respectively. The (linear) duals of $\G{g}$ and $\G{h}$, on the other hand, will be denoted by $\G{g}^\ast$ and $\G{h}^\ast$. We shall make use of the notation
\begin{equation}\label{G}
\begin{split}
g,\tilde{g},\tilde{\tilde{g}} \in G,\qquad \xi,\tilde{\xi},\tilde{\tilde{\xi}} \in \mathfrak{g},\qquad \mu,\tilde{\mu}
,\tilde{\tilde{\mu}} \in\mathfrak{g}^{\ast} 
\\
h,\tilde{h},\tilde{\tilde{h}} \in H,\qquad \eta,\tilde{\eta},\tilde{\tilde{\eta}} \in \mathfrak{g},\qquad \nu,\tilde{\nu}
,\tilde{\tilde{\nu}} \in\mathfrak{h}^{\ast}
\end{split}  
\end{equation}
for the generic elements. As for the representations; 
\[
L:G\times G \to G, \qquad (g,\tilde{g})\mapsto L_g\,\tilde{g}:=g\tilde{g}
\] 
will stand for the left regular representation of $G$ on itself, while the right regular representation will be denoted by 
\[
R:G\times G \to G, \qquad (g,\tilde{g})\mapsto L_{\tilde{g}}\,g:=g\tilde{g}.
\]
Similarly, 
\[
\Ad:G\times G \to G, \qquad (g,\tilde{g})\mapsto \Ad_g\,\tilde{g}:=g\tilde{g}g^{-1}
\]
will be the (left) adjoint representation of $G$ on itself. The same symbol will also be used to denote the (left) adjoint representation of $G$ on its Lie algebra $\G{g}$, namely 
\begin{equation}\label{adj-G-g}
\Ad:G\times \G{g} \to \G{g}, \qquad (g,\xi)\mapsto \Ad_g\,\xi,
\end{equation}
which in turn, induces the (left) coadjoint representation 
\[
\Ad^\ast:G\times \G{g}^\ast \to \G{g}^\ast 
\]
of $G$ on $\G{g}^\ast$ via
\begin{equation}\label{dist*}
\left\langle \Ad_{g^{-1}}^{\ast}\mu,\xi\right\rangle =\left\langle
\mu,\Ad_{g}\xi\right\rangle.   
\end{equation}

The infinitesimal counterpart of \eqref{adj-G-g} is the (left) adjoint representation 
\[
\ad:\G{g}\ot\G{g}\to \G{g}, \qquad \xi\ot \tilde{\xi}\mapsto \ad_\xi\,\tilde{\xi}:=[\xi, \tilde{\xi}]
\]
of $\G{g}$ on itself. Accordingly, we shall denote the (left) coadjoint representation of $\G{g}$ on $\G{g}^\ast$ as
\[
\ad^\ast:\G{g}\ot\G{g}^\ast\to \G{g}^\ast, \qquad \xi\ot \mu\mapsto \ad^\ast_\xi\,\mu,
\]
which is determined by
\begin{equation}\label{coadj-g-g*}
\left\langle \ad^\ast_\xi\mu,\tilde{\xi}\right\rangle =-\left\langle
\mu,\ad_{\xi}\tilde{\xi}\right\rangle.   
\end{equation}

\section{Matched pairs of Lie Groups, and matched pairs of Lie Algebras} \label{Sec-mp}

In this section we recall briefly the very basics on the double cross product Lie groups, and their Lie algebras. More precisely; in Subsection \ref{Sec-mplg} we shall recall the notion of a ``matched pair'' of Lie groups, and thus a double cross product Lie group built on a matched pair of Lie group. In Subsection \ref{subsect-matched-Lie-alg}, then, we shall turn towards the infinitesimal counterpart of this theory, namely the double cross sum Lie algebras built on the matched pairs of Lie algebras. For further details the reader may consult to the incomplete list \cite{LuWe90,Maji90-II,Maji90,Majid-book,Ta81,Zhan10} of references. 

\subsection{Matched pairs of Lie groups} \label{Sec-mplg}~

Let $(G,H)$ be a pair of Lie groups equipped with the mutual actions (left $H$-action on $G$, and right $G$-action on $H$)
\begin{equation}\label{rho}
\rt :H\times G\rightarrow G,\quad \left( h,g\right) \mapsto
h\rt g, \qquad \lt:H\times G\rightarrow H,\quad \left(
h,g\right) \mapsto h\lt g
\end{equation}
which are subject to 
\begin{align*}
& h\rt (g\tilde{g}) = (h\rt g)((h\lt g)\rt\tilde{g}),\\ 
& (h\tilde{h}) \lt g = (h\lt(\tilde{h}\rt g))(\tilde{h}\lt g).
\end{align*}
Then, the pair $(G,H)$ is called a ``matched pair'' of Lie groups. In this case, there is a group structure on the cartesian product $G\times H$ determined by
\begin{equation} \label{Gr-m}
\left( g ,h \right) \left(
\tilde{g},\tilde{h}\right) =\Big( g \left( h\rt
\tilde{g}\right) ,\left( h \lt \tilde{g} \right) \tilde{h}\Big),
\end{equation}
as was observed in \cite[Prop. 6.2.15]{Majid-book}. The group $G\times H$ given by \eqref{Gr-m} is called the ``double cross product'' of $G$ and $H$, and is denoted by $G\bowtie H$ to emphasize the mutual actions.

Conversely, as was also observed in \cite[Prop. 6.2.15]{Majid-book}, given a group $K$ with two subgroups
\[
G\hookrightarrow K \hookleftarrow H
\] 
so that the multiplication on $K$ induces an isomorphism $K\cong G\times H$ as (topological) sets,
\[
G\times H \to K, \qquad (g,h)\mapsto gh,
\]
the pair $(G,H)$ becomes a matched pair of Lie groups, and $K\cong G\bowtie H$ as groups. In this case, the mutual actions of $G$ and $H$ on each other are obtained from
\begin{equation}\label{mutual-actions}
hg = (h\rt g)(h\lt g).
\end{equation}

The double cross product construction generalizes to that of the semi-direct product construction by setting one of the mutual actions to be trivial. More precisely, given a matched pair of groups $(G,H)$, with trivial (left) $H$-action, the double cross product group $G\bowtie H$ is nothing but the semi-direct product group $G\ltimes H$. Similarly, in case of the trivial (right) $G$-action $G\bowtie H$ reduces to the the semi-direct product $G\rtimes H$.

Let us finally remark that although we follow the notation and the terminology of \cite{Maji90,Majid-book}, the double cross product construction goes back to \cite{Sz50,Sz58,szep1962sulle,Za42}; after which it was named as the Zappa-Sz\'ep product in \cite{Br05}. The very same construction appeared also in \cite{KoMa88} under the name of the twilled extension, and in \cite{LuWe90} as the double Lie group.

\subsection{Matched pairs of Lie algebras}\label{subsect-matched-Lie-alg}~

A pair of Lie algebras $(\G{g},\G{h})$ is called a ``matched pair of Lie algebras'' if their mutual actions 
\begin{equation} \label{Lieact}
\rt:\G{h}\ot\G{g}\rightarrow \G{g},\quad \eta\ot \xi \mapsto \eta\rt \xi, \qquad \lt:\G{h}\ot\G{g}\rightarrow \G{h}, \quad \eta\ot \xi \mapsto \eta\lt \xi,
\end{equation}
on each other satisfy
\begin{align*}
& \eta \rt [\xi,\tilde{\xi}] = 
[\eta \rt\xi,\tilde{\xi}]+[\xi,\eta \rt \tilde{\xi}] + (\eta
\lt \xi)\rt \tilde{\xi} - (\eta \lt \tilde{\xi})\rt \xi, \\
& [\eta,\tilde{\eta}]\lt\xi = [\eta,\tilde{\eta}\lt\xi ] + 
[\eta\lt\xi ,\tilde{\eta}] +\eta\lt (\tilde{\eta}\rt \xi) - 
\tilde{\eta}\lt (\eta\rt \xi ).
\end{align*}
It follows from \cite[Prop. 8.3.2]{Majid-book} that given a matched pair of Lie algebras, there is a Lie algebra structure on $\G{g}\oplus \G{h}$ given by
\begin{equation}\label{mpla}
[ (\xi,\eta),\,(\tilde{\xi},\tilde{\eta})] =\left( [\xi,\tilde{\xi}]+\eta\rt \tilde{\xi}-\tilde{\eta}\rt \xi
,\,[\eta,\tilde{\eta}]+\eta\lt \tilde{\xi}-\tilde{\eta}\lt \xi\right).  
\end{equation}
The Lie algebra $\G{g}\oplus \G{h}$ given by \eqref{mpla} is called a ``double cross sum'' Lie algebra, and  it is denoted by $\G{g}\bowtie \G{h}$. Conversely, if $\G{K}$ is a Lie algebra, with Lie subalgebras
\[
\G{g}  \hookrightarrow\G{K} \hookleftarrow \G{h}
\]
so that the summation on $\G{K}$ induces an isomorphism $\G{K} \cong \G{g}\bowtie \G{h}$ as vector spaces,
\[
\G{g}\bowtie \G{h} \to \G{K}, \qquad (\xi,\eta)\mapsto \xi + \eta, 
\]
then the pair $(\G{g},\G{h})$ is a matched pair, and $\G{K} \cong \G{g}\bowtie \G{h}$ as Lie algebras.

Let us note also that if $(G,H)$ be a matched pair of Lie groups, then their Lie algebras $(\G{g},\G{h})$ make a matched pair of Lie algebras, and the Lie algebra of the double cross product Lie group $G\bowtie H$ is the double cross sum Lie algebra $\G{g}\bowtie \G{h}$.

Just as it is for the matched pairs of Lie groups, a double cross sum Lie algebra reduces to a semi-direct sum Lie algebra if one of the actions of the components is assumed to be trivial. More precisely, if the left $\G{h}$-action on $\G{g}$ is trivial, then $\G{g}\bowtie \G{h}$ reduces to the semi-direct sum $\G{g}\ltimes \G{h}$. Similarly, in case the right $\G{g}$- action on $\G{h}$ is trivial, then $\G{g}\bowtie \G{h}$ becomes the semi-direct sum Lie algebra $\G{g}\rtimes \G{h}$.

\section{The 2nd order and the iterated tangent groups} \label{Sec-mpitg}

The present section is about the main objects of study of this note. In Subsection \ref{subsect-tangent-group} we begin with a quick review of the (1st order) tangent group $TG$, and its double cross product decomposition. Then in Subsection \ref{sec-sotg} below we record the group structure of the 2nd order tangent group $T^2G$, its realization as a 2-cocycle extension, and its double cross product decomposition. We conclude the section with the iterated tangent group $TTG$ in Subsection \ref{itg}, where we observe its double cross product decomposition into $T^2G$ and $\G{g}$, as well as two other presentations.

\subsection{The (1st order) tangent group}\label{subsect-tangent-group}~

Let us take a quick tour around the group structure on the tangent bundle $TG$. To this end, we recall the (left) trivialization 
\begin{equation}\label{trTG}
tr:TG\rightarrow G\ltimes\G{g}_1, \qquad V_{g}\mapsto (g,T_{g}L_{g^{-1}}V_{g})=(g,\xi),   
\end{equation}
via which the tangent bundle $TG$ may be endowed with the semi-direct product group structure on  $G\ltimes\G{g}_1$, which in turn is given explicitly by
\begin{equation}\label{tgtri}
\left( g,\xi^{(1)}\right) \left( \tilde{g},\tilde{\xi}^{(1)} \right) =\left(
g\tilde{g},\tilde{\xi}^{(1)}+\Ad_{\tilde{g}^{-1}}\xi^{(1)}\right),
\end{equation}
for any $\xi^{(1)}, \tilde{\xi}^{(1)} \in \G{g}_1 = \G{g}$. Accordingly, the Lie algebra of $TG\cong G\ltimes \G{g}$ is the semi-direct sum Lie algebra $\G{g}_2\ltimes \G{g}_3:=\G{g}\ltimes \G{g}$ whose Lie bracket being
\begin{equation}
[(\xi^{(2)},\xi^{(3)}),(\tilde{\xi}^{(2)},\tilde{\xi}^{(3)})] = ([\xi^{(2)},\tilde{\xi}^{(2)}], \ad_{\xi^{(3)}}\tilde{\xi}^{(2)}-\ad_{\tilde\xi^{(3)}} {\xi}^{(2)})
\end{equation}
for any $\xi^{(2)}, \tilde{\xi}^{(2)} \in \G{g}_2 = \G{g}$, and any $\xi^{(3)}, \tilde{\xi}^{(3)} \in \G{g}_3 = \G{g}$. We note that the indices the Lie algebra $\G{g}$ serve to distinguish the copies of it, since we shall need it in the sequel.

Leaving the further details on the (1st order) tangent group to \cite{esen2014tulczyjew, hindeleh2006tangent, KolaMichSlov-book, marsden1991symplectic, michor2008topics, Rati80}, we now recall the double cross product decomposition of it from \cite[Prop. 2.3]{EsenSutl17}. 

Accordingly, given a matched pair of Lie groups $(G,H)$, their tangent groups $(TG,TH)$ is also a matched pair of Lie groups. Furthermore, 
\begin{equation}\label{TGH-TG-TH}
T(G\bowtie H) \cong TG\bowtie TH
\end{equation}
as Lie groups. In this case, the mutual actions of the tangent groups may be computed to be
\begin{align}
& (h,\eta^{(1)})\rt (g,\xi^{(1)}) = \Big(h\rt g, (h\lt g)\rt \xi^{(1)} + (h\lt g)\rt T_gL_{g^{-1}}(\eta^{(1)} \rt g)\Big), \label{TH-on-TG}\\
& (h,\eta^{(1)})\lt (g,\xi^{(1)}) = \Big(h\lt g, \eta^{(1)}\lt g + T_{(h\lt g)}L_{(h\lt g)^{-1}} \Big((h\lt g)\lt \big(\xi^{(1)} + T_gL_{g^{-1}}(\eta^{(1)} \rt g)\big)\Big)\Big),\label{TG-on-TH}
\end{align}
and the group structure on \eqref{TGH-TG-TH} is thus given by
\[
(g,h ,\xi ,\eta )(\tilde{g},\tilde{h},\tilde{\xi},\tilde{\eta}) =(\tilde{\tilde{g}},\tilde{\tilde{h}},\tilde{\tilde{\xi}},\tilde{\tilde{\eta}}),
\]
where
\begin{align*}
& \tilde{\tilde{g}}=g (h\rt \tilde{g}), \qquad \tilde{\tilde{h}}=(h \lt \tilde{g})\tilde{h}, \qquad 
\tilde{\tilde{\xi}}=\tilde{\xi}+\tilde{h}^{-1} \rt \left(\Ad_{\tilde{g}^{-1}}\xi +T_{\tilde{g}}L_{\tilde{g}^{-1}}(\eta  \rt \tilde{g})\right), \\
&\tilde{\tilde{\eta}}=\tilde{\eta}+T_{\tilde{h}^{-1}}R_{\tilde{h}}\left( \tilde{h}^{-1} \lt (\Ad_{\tilde{g}^{-1}}\xi +T_{\tilde{g}}L_{\tilde{g}^{-1}}(\eta  \rt \tilde{g}) ) \right)+\Ad_{\tilde{h}^{-1}}(\eta  \lt \tilde{g}).
\end{align*}

\subsection{The 2nd order tangent group}\label{sec-sotg}~

Along the lines of \cite{gay2012invariant,gay2011higher}, the $k$th order tangent bundle $\tau^k:T^kQ\to Q$ of a manifold $Q$ is defined to be the equivalence classes of twice differentiable curves in $Q$ with respect to the equivalence relation that relates two curves with identical derivatives at 0 up to $k$th order. More precisely, two curves $p(t),q(t)\in Q$ are set to be equivalent if 
\[
p(0) = q(0), \qquad p'(0) = q'(0), \qquad \ldots \qquad p^{(k)}(0) = q^{(k)}(0).
\]
Accordingly, $[p]$ being the equivalence class of the curve $p(t)\in Q$, the bundle projection is defined to be
\[
\tau^k:T^kQ\to Q, \qquad \tau^k([p]) := p(0).
\]
Refering the reader to \cite{abrunheiro2011cubic, colombo2011geometry, colombo2013optimal, colombo2014unified} for higher order tangent bundles, we now proceed onto the case $Q = G$, a Lie group. In this case, the $k$th order tangent group $T^kG$ has the structure of a Lie group, see for instance \cite{gay2012invariant,KolaMichSlov-book,Vizm13}, via
\[
[p][q] := [pq].	
\] 
Given a curve $p(t)\in G$, now, $\d^\ell p := p^{-1}p'$ being the ``left logarithmic derivative'', the left trivialization (compare with the right-handed version in\cite{Vizm13})
\[
tr^2:T^2G\to G\times \G{g}\times \G{g}, \qquad  [p] \mapsto \Big(p(0), (\d^\ell p)(0), (\d^\ell p)'(0)\Big)
\]
allows to endow $T^2G$ with the group structure given by
\begin{equation} \label{GrT2G}
(g,\xi,\dot{\xi}) (\tilde{g},\tilde{\xi},\dot{\tilde{\xi}}) = (g\tilde{g}, \tilde{\xi} +\Ad_{\tilde{g}^{-1}}\xi,\dot{\tilde{\xi}} +\Ad_{\tilde{g}^{-1}}\dot{\xi} -\ad_{\tilde{\xi}}\Ad_{\tilde{g}^{-1}}\xi).
\end{equation}
Accordingly, the unit element is $(e,0,0) \in G \times \G{g} \times  \G{g}$, and the inversion is given by
\[
(g,\xi,\dot{\xi})^{-1} = (g^{-1}, -\Ad_g\xi, -\Ad_g\dot{\xi}).
\]
Once again, in order to distinguish the copies of the Lie algebra $\G{g}$, we shall make use of the notation $G \times \G{g} \times  \dot{\G{g}}$.

The next proposition sheds further light on the (group) structure of $T^2G$, see also \cite{Vizm13}.

\begin{proposition}
The 2nd order tangent group $T^2G$ is a 2-cocycle extension of $TG$, by $\dot{\G{g}}$. In short, 
\begin{equation}\label{T2G-cocycle-ext}
T^2G \cong TG \ltimes_\vp \dot{\G{g}}
\end{equation}
\end{proposition}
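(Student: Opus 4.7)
The plan is to exhibit $T^2G$ as an extension by directly identifying the normal subgroup, the quotient, the induced action, and the 2-cocycle from the multiplication \eqref{GrT2G}. The first step is to observe that the forgetful map
$$\pi:T^2G\longrightarrow TG,\qquad (g,\xi,\dot{\xi})\longmapsto (g,\xi),$$
is a surjective Lie group homomorphism: a direct comparison of \eqref{GrT2G} with \eqref{tgtri} shows that $\pi$ respects the multiplications, and its kernel $\{(e,0,\dot{\xi}):\dot{\xi}\in\dot{\G{g}}\}$ is abelian, since
$$(e,0,\dot{\xi})\,(e,0,\dot{\tilde\xi})=(e,0,\dot{\xi}+\dot{\tilde\xi}).$$
Thus $\dot{\G{g}}\hookrightarrow T^2G$ as a closed normal subgroup, and we obtain a short exact sequence
$0\to\dot{\G{g}}\to T^2G\xrightarrow{\ \pi\ } TG\to 1$
of Lie groups.

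The second step is to read off the $TG$-action on $\dot{\G{g}}$ and the 2-cocycle from the canonical smooth set-theoretic section $s:TG\to T^2G$, $s(g,\xi):=(g,\xi,0)$. Using \eqref{GrT2G} together with the inversion formula $(g,\xi,0)^{-1}=(g^{-1},-\Ad_g\xi,0)$, a direct computation yields
$$s(g,\xi)\,(e,0,\dot{\eta})\,s(g,\xi)^{-1}=(e,0,\Ad_g\dot{\eta}),$$
so that the action of $TG$ on the kernel factors through the projection $TG\to G$ and is the restriction of the usual adjoint action. A second direct computation from \eqref{GrT2G} gives
$$s(g,\xi)\,s(\tilde g,\tilde\xi)=s\bigl((g,\xi)(\tilde g,\tilde\xi)\bigr)\cdot\bigl(e,0,-\ad_{\tilde\xi}\Ad_{\tilde g^{-1}}\xi\bigr),$$
which identifies the would-be 2-cocycle $\varphi:TG\times TG\to\dot{\G{g}}$ as
$$\varphi\bigl((g,\xi),(\tilde g,\tilde\xi)\bigr):=-\ad_{\tilde\xi}\Ad_{\tilde g^{-1}}\xi.$$

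The only remaining check is that $\varphi$ satisfies the 2-cocycle identity with respect to the above action, so that the construction really produces an extension. Rather than verifying this identity by brute force (which would amount to a Jacobi/associativity computation already essentially contained in the proof that \eqref{GrT2G} defines a group), I would invoke the general principle that, once $\pi$, a section $s$, an action, and a cocycle have been extracted from a group extension, the associativity of the total group is \emph{equivalent} to the 2-cocycle identity; the known associativity of \eqref{GrT2G} then delivers the identity for free. The main (mild) obstacle is purely clerical: the sign on $\ad_{\tilde\xi}\Ad_{\tilde g^{-1}}\xi$ and the order of the arguments in $\varphi$ must be reconciled with whatever sign convention is fixed for the symbol $TG\ltimes_\varphi\dot{\G{g}}$, after which the isomorphism $T^2G\cong TG\ltimes_\varphi\dot{\G{g}}$ in \eqref{T2G-cocycle-ext} is immediate.
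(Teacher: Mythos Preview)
Your argument is correct and follows essentially the same route as the paper: both identify the same right $TG$-action $\dot{\xi}\lt(g,\xi)=\Ad_{g^{-1}}\dot{\xi}$ (your conjugation formula $s(g,\xi)(e,0,\dot\eta)s(g,\xi)^{-1}=(e,0,\Ad_g\dot\eta)$ is the left-action version of the same thing) and the same cocycle $\varphi\bigl((g,\xi),(\tilde g,\tilde\xi)\bigr)=-\ad_{\tilde\xi}\Ad_{\tilde g^{-1}}\xi$. The only difference is that the paper verifies the $2$-cocycle identity by a direct computation, whereas you invoke the standard fact that the data extracted from a genuine abelian extension via a section automatically satisfy the cocycle identity by associativity of $T^2G$; both are valid, and your shortcut is the cleaner of the two.
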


\begin{proof}
It follows at once that 
\[ \lt:\G{\dot{g}}\times TG \to \G{\dot{g}}, \qquad (\dot{\xi},(g,\xi))\mapsto 
\dot{\xi} \lt (g,\xi) := \Ad_{g^{-1}}\dot{\xi}
\]
determines a right action of $TG$ on $\dot{\G{g}}$, and that, in view of the identification $T^2G\cong G\times \G{g}\times \dot{\G{g}}$ above (via the left trivialization), the group structure may be expressed as
\[
(g,\xi,\dot{\xi}) (\tilde{g},\tilde{\xi},\dot{\tilde{\xi}}) = \Big((g,\xi)(\tilde{g},\tilde{\xi}); \dot{\tilde{\xi}} + \dot{\xi}\lt (\tilde{g},\tilde{\xi}) + \vp((g,\xi),(\tilde{g},\tilde{\xi}))\Big),
\]
where
\begin{equation}\label{vp-map}
\vp: TG \times TG \to \dot{\G{g}}, \qquad \vp((g,\xi),(\tilde{g},\tilde{\xi})) := -\ad_{\tilde{\xi}}\Ad_{\tilde{g}^{-1}}\xi.
\end{equation}
On the other hand, it takes a routine calculation to observe that \eqref{vp-map} satisfies the (cocycle) condition
\begin{align*}
& d\vp((g,\xi), (\tilde{g},\tilde{\xi}),(\tilde{\tilde{g}},\tilde{\tilde{\xi}})) := \\
&\vp((\tilde{g},\tilde{\xi}),(\tilde{\tilde{g}},\tilde{\tilde{\xi}})) - \vp((g,\xi)(\tilde{g},\tilde{\xi}),(\tilde{\tilde{g}},\tilde{\tilde{\xi}})) + \vp((g,\xi),(\tilde{g},\tilde{\xi})(\tilde{\tilde{g}},\tilde{\tilde{\xi}})) - \vp((g,\xi),(\tilde{g},\tilde{\xi}))\lt (\tilde{\tilde{g}},\tilde{\tilde{\xi}}) = 0,
\end{align*}
for any $(g,\xi), (\tilde{g},\tilde{\xi}),(\tilde{\tilde{g}},\tilde{\tilde{\xi}}) \in TG$. That is, \eqref{vp-map} determines a 2-cocycle in the group cohomology of $TG$, with coefficients in $\dot{\G{g}}$; or in short, $\vp \in H^2(TG,\dot{\G{g}})$. The claim thus follows.
\end{proof}

We shall conclude the present subsection with the double cross product decomposition of the 2nd order tangent group. 

\begin{proposition} \label{prop-msotg}
If $(G,H)$ is a matched pair of Lie groups, then so is $(T^2G, T^2H)$. Moreover, 
\begin{equation}\label{T2GH-matched}
T^2(G\bowtie H) \cong T^2G\bowtie T^2H
\end{equation}
as Lie groups.
\end{proposition}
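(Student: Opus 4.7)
My plan is to reduce the statement to a direct application of the characterization of matched pairs recalled after equation \eqref{Gr-m}: a pair $(G',H')$ of Lie subgroups of a Lie group $K'$ is a matched pair, and $K' \cong G' \bowtie H'$, as soon as the multiplication $G' \times H' \to K'$ is a diffeomorphism. I will apply this with $K' = T^2(G\bowtie H)$, $G' = T^2 G$ and $H' = T^2 H$, and deduce everything from the functoriality and product-preservation of the higher order tangent construction.

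First, I would invoke the fact that $T^k$ is a functor from the category of smooth manifolds to itself which restricts to a functor from Lie groups to Lie groups (cf.\ \cite{KolaMichSlov-book}). This yields at once that $T^2(G\bowtie H)$ is a Lie group, and that the subgroup inclusions $G \hookrightarrow G\bowtie H \hookleftarrow H$ are sent to Lie subgroup inclusions $T^2 G \hookrightarrow T^2(G\bowtie H) \hookleftarrow T^2 H$. The left-trivialized description $T^2 G \cong G \times \G{g} \times \dot{\G{g}}$ recorded in \eqref{GrT2G}, and its analogue for $H$, provides a concrete verification that these are closed embeddings whose restricted group laws are the intrinsic ones.

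Second, I would appeal to the product-preservation of $T^k$, namely the canonical isomorphism $T^2(G\times H) \cong T^2 G \times T^2 H$ obtained by splitting the 2-jet at $0$ of a curve $(p,q)$ in $G\times H$ into its two components $[p]$ and $[q]$. Since the multiplication $\mu: G\times H \to G\bowtie H$, $(g,h)\mapsto gh$, is a diffeomorphism by the very definition of the double cross product, the functor image $T^2 \mu$ is a diffeomorphism $T^2 G \times T^2 H \to T^2(G\bowtie H)$. Unwinding the definitions, $T^2\mu$ coincides with $([p],[q])\mapsto [pq]$, i.e.\ with the multiplication in $T^2(G\bowtie H)$ of the two embedded factors.

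Combining these two observations with the characterization of matched pairs gives immediately that $(T^2 G, T^2 H)$ is a matched pair and $T^2(G\bowtie H) \cong T^2 G \bowtie T^2 H$ as Lie groups. The only genuine point requiring care is the compatibility of $T^k$ with products and with the group multiplication, but this is routine from the definition of $T^k Q$ as equivalence classes of curves, a curve into a product being the same data as a pair of curves into the factors with the same order of contact at the origin. If desired, the explicit mutual actions of $T^2 G$ and $T^2 H$ on each other could then be extracted by matching this bijection against the defining relation \eqref{mutual-actions}, in analogy with the first-order formulas \eqref{TH-on-TG}--\eqref{TG-on-TH}.
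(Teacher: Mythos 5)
Your proposal is correct, and it reaches the conclusion by a genuinely different route than the paper, even though both arguments ultimately rest on the same characterization of matched pairs from \cite[Prop.~6.2.15]{Majid-book} (a group $K$ with subgroups $G',H'$ such that multiplication $G'\times H'\to K$ is bijective yields $K\cong G'\bowtie H'$). The paper verifies the bijectivity of the multiplication map $T^2G\times T^2H\to T^2(G\bowtie H)$ by brute force in the left-trivialized coordinates of \eqref{GrT2G}: it writes out the product of the embedded factors \eqref{inc-T2G} explicitly and then solves for the inverse, producing the components $A_1,B_1$ in \eqref{A_B_1} and $A_0,B_0$ in \eqref{A_0B_0}. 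You instead observe that the multiplication $\mu:G\times H\to G\bowtie H$ is already a diffeomorphism, that $T^2$ is a product-preserving functor on manifolds restricting to Lie groups, and that $T^2\mu$ coincides with the multiplication of the embedded copies of $T^2G$ and $T^2H$ inside $T^2(G\bowtie H)$ (since $[p][q]:=[pq]$ and a $2$-jet of a curve into a product is a pair of $2$-jets). Your argument is shorter, avoids coordinates entirely, and generalizes verbatim to $T^k$ for all $k$ (indeed to any product-preserving Lie-group-valued functor). What the paper's computation buys, and what your approach defers, is the explicit inverse of the multiplication map: it is precisely the formulas for $A_1,B_1,A_0,B_0$ that feed, via \eqref{mutual-actions}, into the explicit expressions for the mutual actions of $T^2G$ and $T^2H$ displayed immediately after the proposition, which are needed later for the dynamics. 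You correctly note that these could be extracted afterwards from your bijection, but that extraction is essentially the computation the paper performs inside the proof; so for the purposes of this paper the two routes end up doing comparable work, with yours cleanly separating the structural statement from the coordinate formulas.
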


\begin{proof} 	
In view of \cite[Prop. 6.2.15]{Majid-book}, we begin with the inclusions 
\begin{align} \label{inc-T2G}
\begin{split}
& T^2G \hookrightarrow T^2(G\bowtie H), \qquad (g,\xi,\dot{\xi}) \mapsto \Big((g,e), (\xi,0), (\dot{\xi},0)\Big), \\
& T^2H \hookrightarrow T^2(G\bowtie H), \qquad (h,\eta,\dot{\eta}) \mapsto \Big((e,h), (0,\eta), (0,\dot{\eta})\Big).
\end{split}
\end{align}
Next, recording the group multiplication
\begin{align*}
&(g,h,\xi,\eta,\dot{\xi},\dot{\eta})(\tilde{g},\tilde{h},\tilde{\xi},\tilde{\eta},\dot{\tilde{\xi}},\dot{\tilde{\eta}}) = \\
& \Big(g(h \triangleright \tilde{g}),(h \triangleright \tilde{g})\tilde{h}, \tilde{\xi}+\tilde{h}^{-1}\triangleright \tau,\tilde{\eta}+T_{\tilde{h}^{-1}}R_{\tilde{h}}(\tilde{h}^{-1} \triangleleft \tau)+\Ad_{\tilde{h}^{-1}}(\eta \triangleleft \tilde{g}),\\
& \hspace{6cm} \dot{\tilde{\xi}}+ \tilde{h}^{-1}\triangleright \dot{\tau}, 
\dot{\tilde{\eta}}+T_{\tilde{h}^{-1}}R_{\tilde{h}}( \tilde{h}^{-1} \triangleleft \dot{\tau})+\Ad_{\tilde{h}^{-1}}(\dot{\eta} \triangleleft \tilde{g})\Big), 
\end{align*}
where 
\begin{align*}
\tau=\Ad_{\tilde{g}^{-1}}\xi+T_{\tilde{g}}L_{\tilde{g}^{-1}}(\eta \triangleright \tilde{g}), \qquad 
\dot{\tau}=\Ad_{\tilde{g}^{-1}}\dot{\xi}+T_{\tilde{g}}L_{\tilde{g}^{-1}}(\dot{\eta} \triangleright \tilde{g}),
\end{align*}
we proceed to the multiplication map 
	\begin{align*}
	& T^2G\times T^2H \to T^2(G\bowtie H), \\
	& \Big(\big(g,\xi,\dot{\xi}\big),\big(h,\eta,\dot{\eta}\big)\Big) \mapsto \Big((g,e), (\xi,0), (\dot{\xi},0)\Big)\Big((e,h), (0,\eta), (0,\dot{\eta})\Big) =  \\
	&\Big((g,h), \Ad_{(e,h)^{-1}}(\xi,0) + (0,\eta), \vp\big(((g,e), (\xi,0)),((e,h), (0,\eta))\big) + \Ad_{(e,h)^{-1}}(\dot{\xi},0) + (0,\dot{\eta})\Big) \\
	& \qquad  =
	\Big((g,h), \Ad_{(e,h)^{-1}}(\xi,0) + (0,\eta), -\ad_{(0,\eta)}\Ad_{(e,h)^{-1}}(\xi,0) + \Ad_{(e,h)^{-1}}(\dot{\xi},0) + (0,\dot{\eta})\Big) \\
	& \qquad  =
	\Big((g,h), \Ad_{(e,h)^{-1}}(\xi,0) + (0,\eta), \big[\Ad_{(e,h)^{-1}}(\xi,0), (0,\eta)\big] + \Ad_{(e,h)^{-1}}(\dot{\xi},0) + (0,\dot{\eta})\Big) \\
	& \qquad  = \Big((g,h),(h^{-1}\rt \xi, T_{h^{-1}}R_h(h^{-1}\lt \xi)+\eta), \\
	&\hspace{2cm}\big[(h^{-1}\rt \xi, T_{h^{-1}}R_h(h^{-1}\lt \xi), (0,\eta)\big] + (h^{-1}\rt \dot{\xi}, T_{h^{-1}}R_h(h^{-1}\lt \dot{\xi}) + \dot{\eta})\Big) \\
	& \qquad  = \Big((g,h),(h^{-1}\rt \xi, T_{h^{-1}}R_h(h^{-1}\lt \xi)+\eta), \\
	&\hspace{2cm}\big( -\eta \rt (h^{-1}\rt \xi), [T_{h^{-1}}R_h(h^{-1}\lt \xi),\eta] - \eta\lt (h^{-1}\rt \xi)\big) + (h^{-1}\rt \dot{\xi}, T_{h^{-1}}R_h(h^{-1}\lt \dot{\xi}) + \dot{\eta})\Big) \\
\end{align*}
\begin{align*}
	& \qquad  = \Big((g,h),(h^{-1}\rt \xi, T_{h^{-1}}r_h(h^{-1}\lt \xi)+\eta), \\
	&\hspace{2cm}\big(h^{-1}\rt \dot{\xi} -\eta \rt (h^{-1}\rt \xi), [T_{h^{-1}}R_h(h^{-1}\lt \xi),\eta] - \eta\lt (h^{-1}\rt \xi) + T_{h^{-1}}R_h(h^{-1}\lt \dot{\xi}) + \dot{\eta}\big) \Big).
	\end{align*}
We now conclude with the the observation that the multiplication map above is invertible, by calculating its inverse explicitly. Setting
	\begin{align*}
	& T^2(G\bowtie H) \to T^2G \times T^2H, \\
&	\Big((g,h),(\xi,\eta);(\dot{\xi},\dot{\eta})\Big) \mapsto \Big((g,e),(A_1,0),(A_0,0)\Big)\Big((e,h),(0,B_1),(0,B_0)\Big),
	\end{align*}
it follows from
	\begin{align*}
	\Big((g,h),(\xi,\eta)\Big) &= \Big((g,e),(A_1,0)\Big)\Big((e,h),(0,B_1)\Big) = \Big((g,h),\Ad_{(e,h)^{-1}}(A_1,0) + (0,B_1)\Big) \\
	& = \Big((g,h),(h^{-1}\rt A_1,T_{h^{-1}}R_h(h^{-1}\lt A_1) + B_1) \Big),
	\end{align*}
that 
	\begin{equation} \label{A_B_1}
	A_1 = h\rt \xi, \qquad B_1 = \eta - T_{h^{-1}}R_h(h^{-1}\lt (h\rt \xi)) = \eta + T_hL_{h^{-1}}(h\lt \xi).
	\end{equation}
	In order to determine $A_0$ and $B_0$ we compute
\begin{align*}
	(\dot{\xi},\dot{\eta} ) & = \vp\big(((g,e),(A_1,0)), \, ((e,h),(0,B_1))\big) + \Ad_{(e,h)^{-1}}(A_0,0) + (0,B_0) \\
	& = \big[\Ad_{(e,h)^{-1}}(A_1,0), (0,B_1)\big]  + \Ad_{(e,h)^{-1}}(A_0,0) + (0,B_0) \\
& =\big[(\xi,\eta) - (0,B_1), (0,B_1)\big]  + (h^{-1}\rt A_0, T_{h^{-1}}R_h(h^{-1}\lt A_0) + B_0)  \\
		& =\big[(\xi,\eta) , (0,B_1)\big]  + (h^{-1}\rt A_0, T_{h^{-1}}R_h(h^{-1}\lt A_0) + B_0)  \\
	& = (-B_1 \rt \xi,[\eta,B_1] - B_1\lt \xi ) +  (h^{-1}\rt A_0, T_{h^{-1}}R_h(h^{-1}\lt A_0) + B_0 ) \\
	& = (h^{-1}\rt A_0-B_1 \rt \xi,[\eta,B_1] - B_1\lt \xi + T_{h^{-1}}R_h(h^{-1}\lt A_0) + B_0 ). 
	\end{align*}
Accordingly,
	\begin{equation} \label{A_0B_0}
	A_0 = h\rt (B_1 \rt \xi) + h\rt \dot{\xi}, \qquad B_0 = \dot{\eta} + B_1\lt \xi - T_{h^{-1}}R_h(h^{-1}\lt A_0) + [B_1,\eta].
	\end{equation}
Finally we substitute $A_1$ and $B_1$ of \eqref{A_B_1} into \eqref{A_0B_0} to obtain 
	\begin{equation}
	\begin{split}
	A_0 &= h\rt (\eta \rt \xi) + h\rt \big(T_hL_{h^{-1}}(h\lt \xi) \rt \xi\big) + h\rt \dot{\xi},
	\\
	B_0 &= \dot{\eta} + \eta\lt \xi + T_hL_{h^{-1}}(h\lt \xi)\lt \xi + T_hL_{h^{-1}}(h\lt (\eta\rt \xi))  \\ & \qquad +
	T_hL_{h^{-1}}(h\lt (T_hL_{h^{-1}}(h\lt \xi)\rt \xi)) + T_hL_{h^{-1}}(h\lt \dot{\xi}) + [T_hL_{h^{-1}}(h\lt \xi),\eta].
	\end{split}
	\end{equation}
\end{proof}

Now the same line of thought as \eqref{mutual-actions}, namely 
\begin{equation}
\Big((e,h),(0,\eta),(0,\dot{\eta})\Big)\Big((g,e),(\xi,0),(\dot{\xi},0)\Big) = \bigg\{\Big(h,\eta,\dot{\eta}\Big)\rt\Big(g,\xi,\dot{\xi}\Big)\bigg\}\bigg\{\Big(h,\eta,\dot{\eta}\Big)\lt\Big(g,\xi,\dot{\xi}\Big)\bigg\},
\end{equation}
yields the mutual actions of $T^2G$ and $T^2H$ on each other. More explicitly, the left action of $T^2H$ on $T^2G$ is given by 
\begin{align}
\begin{split}
& \Big(h,\eta,\dot{\eta}\Big)\rt\Big(g,\xi,\dot{\xi}\Big) =  \bigg(h\rt g, \, (h\lt g)\rt \big(\xi + T_gL_{g^{-1}}(\eta \rt g)\big), \\ 
& \hspace{2cm} (h\lt g) \rt \Big((\eta\lt g) \rt \big(\xi + T_gL_{g^{-1}}(\eta \rt g)\big)\Big)  \\
& \hspace{2cm} +(h\lt g) \rt \bigg(\Big\{T_{h\lt g}L_{(h\lt g)^{-1}}\Big((h\lt g) \lt \big(\xi + T_gL_{g^{-1}}(\eta\rt g)\big) \Big)\Big\} \rt \big(\xi + T_gL_{g^{-1}}(\eta \rt g)\big)\bigg) \\ 
 & \hspace{2cm} +(h\lt g) \rt \big(\dot{\xi} + T_gL_{g^{-1}}(\dot{\eta} \rt g) -\ad_\xi\big(T_gL_{g^{-1}}(\eta \rt g)\big) + (\eta\lt g)\rt \xi  \big)\bigg),
\end{split}
\end{align}
whereas the right action of $T^2G$ on $T^2H$ is
\begin{align}
\begin{split}
& \Big(h,\eta;\dot{\eta}\Big)\lt\Big(g,\xi,\dot{\xi}\Big) = \bigg( h\lt g,\, \eta \lt g + T_{h\lt g}L_{(h\lt g)^{-1}}\Big((h\lt g) \lt \big(\xi + T_gL_{g^{-1}}(\eta\rt g)\big)\Big),\\ 
& \hspace{2cm} \dot{\eta} \lt g + (\eta\lt g)\lt \xi + (\eta \lt g)\lt \big(\xi + T_gL_{g^{-1}}(\eta \rt g)\big)  \\
& \hspace{2cm} + T_{h\lt g}L_{(h\lt g)^{-1}}\Big((h\lt g) \lt \big(\xi + T_gL_{g^{-1}}(\eta \rt g)\big)\Big) \lt \big(\xi + T_gL_{g^{-1}}(\eta \rt g)\big) \\ 
& \hspace{2cm} + T_{h\lt g}L_{(h\lt g)^{-1}}\Big((h\lt g) \lt \Big\{(\eta \lt g)\rt\big(\xi + T_gL_{g^{-1}}(\eta \rt g)\big)\Big\}\Big) \\ 
&  \hspace{2cm}  + T_{h\lt g}L_{(h\lt g)^{-1}}\Big( (h\lt g) \lt \Big\{T_{h\lt g}L_{(h\lt g)^{-1}}\Big((h\lt g) \lt \big(\xi + T_gL_{g^{-1}}(\eta\rt g)\big)\Big)\rt \big(\xi + T_gL_{g^{-1}}(\eta\rt g)\big) \Big\}\Big) \\ 
& \hspace{2cm}  + T_{h\lt g}L_{(h\lt g)^{-1}}\Big( (h\lt g) \lt \Big\{\dot{\xi} + T_gL_{g^{-1}}(\dot{\eta}\rt g) -\ad_\xi\big(T_gL_{g^{-1}}(\eta\rt g)\big) + (\eta\lt g)\rt\xi \Big\}\Big)\bigg) \\
& \hspace{2cm}  + [T_{h\lt g}L_{(h\lt g)^{-1}}\Big((h\lt g) \lt \big(\xi + T_gL_{g^{-1}}(\eta\rt g)\big)\Big), \, \eta\lt g].
\end{split}
\end{align}

\subsection{The iterated tangent group} \label{itg} ~

We shall now turn out attention to the (twice) iterated tangent group
\begin{equation}\label{TTG-gr-structure}
TTG\cong T(TG) \cong T(G\ltimes\G{g}_1) \cong (G\ltimes\G{g}_1)\ltimes (\G{g}_2\ltimes\G{g}_3).
\end{equation}
Iterative application of the (left) trivialization  
\begin{align}\label{trTTG}
\begin{split}
& tr: TTG \to ( G\ltimes\G{g}_1)
\ltimes (\G{g}_2\ltimes\G{g}_3), \\  
& \left( V_{g},V_{\xi^{(1)}}\right) \mapsto (g,\xi^{(1)},\xi^{(2)},\xi^{(3)}):=\left(
g,\xi^{(1)},TL_{g^{-1}}V_{g},V_{\xi^{(1)}}-\left[ \xi^{(1)},TL_{g^{-1}}V_{g}\right]
\right),
\end{split}
\end{align}
yields the multiplication
\begin{equation}\label{GrTTG}
\begin{split}
&\left( g,\xi^{(1)},\xi^{(2)},\xi^{(3)}\right) \left(
\widetilde{g},\tilde{\xi}^{(1)},\tilde{\xi}^{(2)},\tilde{\xi}^{(3)}
\right) =\\ 
& \left( g\widetilde{g},\tilde{\xi}^{(1)}+\Ad_{\widetilde{g}^{-1}}\xi^{(1)},\tilde{\xi}^{(2)}+\Ad_{\widetilde{g}^{-1}}\xi^{(2)},\tilde{\xi}^{(3)}+\Ad_{\widetilde{g}^{-1}}\xi^{(3)}+[\Ad_{\widetilde{g}^{-1}}\xi^{(2)},\tilde{\xi}^{(1)}]\right),
\end{split}
\end{equation}
see for instance \cite{bertram2008differential,Vizm13}, whose identity element being $(e,0,0,0) \in TTG$. Once again, the indices of the Lie algebra $\G{g}$ serve to distinguish the identical copies.

\begin{remark}
It follows at once from \cite[Prop. 2.3]{EsenSutl17} that if $(G,H)$ is a matched pair of Lie groups, then so is $(TTG,TTH)$, since $(TG,TH)$ is. Accordingly, 
\[
TT(G\bowtie H) \cong TTG \bowtie TTH
\] 
as Lie groups. Along the same line of thought, there emerges two immediate realizations of $TTG$; one being $TTG \cong (G\ltimes \G{g}_1)\ltimes (\G{g}_2\ltimes \G{g}_3)$ via \eqref{trTTG}, and the other one being $TTG \cong TG \ltimes T\G{g}_1 \cong (G\ltimes \G{g}_2) \ltimes (\G{g}_1\times \G{g}_3)$ via \eqref{TGH-TG-TH}. The explicit relation between the two was obtained in \cite[(2.58)\,\&\,(2.59)]{EsenSutl17}. Namely,
\begin{equation}\label{12-to-21}
(G\ltimes \G{g}_1)\ltimes (\G{g}_2\ltimes \G{g}_3) \to (G\ltimes \G{g}_2) \ltimes (\G{g}_1\times \G{g}_3), \qquad (g,\xi^{(1)},\xi^{(2)},\xi^{(3)}) \mapsto (g,\xi^{(2)},\xi^{(1)},\xi^{(3)}+\ad_{\xi^{(1)}}\xi^{(2)}),
\end{equation}
with the inverse
\begin{equation}\label{21-to-12}
(G\ltimes \G{g}_2) \ltimes (\G{g}_1\times \G{g}_3) \to (G\ltimes \G{g}_1)\ltimes (\G{g}_2\ltimes \G{g}_3), \qquad (g,\xi^{(2)},\xi^{(1)},\xi^{(3)}) \mapsto (g,\xi^{(1)},\xi^{(2)},\xi^{(3)}-\ad_{\xi^{(1)}}\xi^{(2)}).
\end{equation}
\end{remark}

However, in order to recover the 2nd order Euler-Lagrange equations on $T^2G$ from those on $TTG$, we shall now observe another double cross product decomposition of the iterated tangent group.

\begin{proposition}\label{prop-TTG-g-T2G}
Given any Lie group $G$, with its Lie algebra $\G{g}$ considered as an abelian Lie group; $(\G{g},T^2G)$ is a matched pair of Lie groups. Moreover,
\begin{equation}\label{TTG-g-T2G}
TTG \cong \G{g}\bowtie T^2G .
\end{equation}
\end{proposition}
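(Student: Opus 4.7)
The plan is to invoke Proposition 6.2.15 of \cite{Majid-book}: if a Lie group $K$ contains two Lie subgroups whose product map $G\times H\to K$ is a diffeomorphism, then $(G,H)$ is automatically a matched pair with $K\cong G\bowtie H$. So I only need to produce compatible inclusions of $\G{g}$ and $T^2G$ into $TTG$, and check that the product map is bijective; the mutual actions and their compatibilities will then follow from the general theory.

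Working in the left-trivialized picture $TTG\cong (G\ltimes \G{g}_1)\ltimes (\G{g}_2\ltimes \G{g}_3)$ with multiplication \eqref{GrTTG}, I would first define
\[
\iota_{T^2G}:T^2G\hookrightarrow TTG,\qquad (g,\xi,\dot{\xi})\mapsto (g,\xi,\xi,\dot{\xi}).
\]
Geometrically this is nothing but the canonical embedding of second order jets of curves inside the iterated tangent bundle: for $\xi=(\d^\ell p)(0)$, the trivialization $V_g\mapsto T_gL_{g^{-1}}V_g$ sends $p'(0)$ to $\xi$, forcing $\xi^{(2)}=\xi^{(1)}=\xi$. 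Substituting $\xi^{(1)}=\xi^{(2)}=\xi$, $\tilde\xi^{(1)}=\tilde\xi^{(2)}=\tilde\xi$ into \eqref{GrTTG}, and using $[\Ad_{\tilde g^{-1}}\xi,\tilde\xi]=-\ad_{\tilde\xi}\Ad_{\tilde g^{-1}}\xi$, the product collapses to precisely the multiplication \eqref{GrT2G} on $T^2G$, so $\iota_{T^2G}$ is a Lie group monomorphism.

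Next, exploiting that $\G{g}$ is being regarded as an abelian Lie group, I would set
\[
\iota_{\G{g}}:\G{g}\hookrightarrow TTG,\qquad \zeta\mapsto (e,\zeta,0,0),
\]
for which \eqref{GrTTG} gives $(e,\zeta,0,0)(e,\tilde\zeta,0,0)=(e,\zeta+\tilde\zeta,0,0)$ at once. The resulting product map reads
\[
\G{g}\times T^2G\to TTG,\qquad (\zeta,(g,\xi,\dot{\xi}))\mapsto (e,\zeta,0,0)(g,\xi,\xi,\dot{\xi})=(g,\xi+\Ad_{g^{-1}}\zeta,\xi,\dot{\xi}),
\]
and is manifestly a diffeomorphism with explicit inverse
\[
(g,\xi^{(1)},\xi^{(2)},\xi^{(3)})\mapsto \Big(\Ad_g(\xi^{(1)}-\xi^{(2)}),\,(g,\xi^{(2)},\xi^{(3)})\Big),
\]
concluding the proof by Majid's criterion.

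The only subtle step is recognizing the correct embedding $T^2G\hookrightarrow TTG$; once the identification $\xi^{(1)}=\xi^{(2)}$ on a second order jet is noted, verifying that $\iota_{T^2G}$ is a group homomorphism is a one-line calculation. As an optional by-product, multiplying in the opposite order $(g,\xi,\xi,\dot{\xi})(e,\zeta,0,0)$ and rewriting the result in the canonical form $(\iota_{\G{g}}\zeta')\,(\iota_{T^2G}(g',\xi',\dot{\xi}'))$ reads off the mutual actions
\[
(g,\xi,\dot{\xi})\rt\zeta = \Ad_g\zeta,\qquad (g,\xi,\dot{\xi})\lt\zeta = (g,\xi,\dot{\xi}-\ad_\zeta\xi),
\]
which are exactly what is needed to perform the reduction from the first order Euler--Lagrange equations on $TTG$ down to the second order ones on $T^2G$ in the next subsection.
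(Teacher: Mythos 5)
Your proof is correct and follows essentially the same route as the paper: the same two inclusions $(g,\xi,\dot{\xi})\mapsto(g,\xi,\xi,\dot{\xi})$ and $\zeta\mapsto(e,\zeta,0,0)$, the same product map landing on $(g,\xi+\Ad_{g^{-1}}\zeta,\xi,\dot{\xi})$, and the same appeal to \cite[Prop.\ 6.2.15]{Majid-book}. The extra details you supply --- the explicit check that $\iota_{T^2G}$ is a homomorphism onto its image, the explicit inverse of the product map, and the mutual actions, which agree with the paper's since $\dot{\xi}-\ad_{\zeta}\xi=\dot{\xi}+\ad_{\xi}\zeta$ --- are all consistent with what the paper records in and immediately after its proof.
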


\begin{proof}
Let us begin with the inclusions 
\[
\G{g}\hookrightarrow TTG \hookleftarrow T^2G
\]
given by
\begin{equation}\label{T2G-to-TTG}
T^2G \to TTG, \qquad (g,\xi,\dot{\xi})\mapsto (g,\xi,\xi,\dot{\xi})
\end{equation}
and
\begin{equation}\label{g-to-TTG}
\G{g}\to TTG, \qquad \xi \mapsto (e,\xi,0,0).
\end{equation}
It takes a routine verification to see that the images of the inclusions \eqref{T2G-to-TTG} and \eqref{g-to-TTG} are indeed subgroups. Furthermore, we see at once that the multiplication on $TTG$ induces
\begin{align*}
& \G{g}\times T^2G \to TTG, \\
& \Big(\tilde{\xi};(g,\xi,\dot{\xi}) \Big) \mapsto (e,\tilde{\xi},0,0)(g,\xi,\xi,\dot{\xi}) = (g,\xi+ \Ad_{g^{-1}}\tilde{\xi},\xi ,\dot{\xi}),
\end{align*}
which is an isomorphism. The claim then follows from \cite[Prop. 6.2.15]{Majid-book}. 
\end{proof}

Let us, for the sake of completeness of the exposition, we now calculate the mutual actions of $T^2G$ and $\G{g}$ on each other. Following the idea in \eqref{mutual-actions}, we calculate
\begin{align*}
& (g,\xi,\xi,\dot{\xi}) (e,\tilde{\xi},0,0)= (g,\xi + \tilde{\xi},\xi,\dot{\xi} +\ad_\xi\tilde{\xi}) = \\
& (e,0,\Ad_g\tilde{\xi},0)(g,\xi,\xi,\dot{\xi}+\ad_\xi\tilde{\xi}).
\end{align*}
Therefore,
\begin{align*}
& (g,\xi,\dot{\xi})\rt \tilde{\xi} =  \Ad_g\tilde{\xi}, \\
& (g,\xi,\dot{\xi}) \lt \tilde{\xi} = (g,\xi,\dot{\xi}+\ad_\xi\tilde{\xi}).
\end{align*}

In order to keep other options of reductions of the Euler-Lagrange equations on $TTG$ we shall record below two more realizations (a 2-cocycle extensions, and a semi-direct product) of $TTG$.

\begin{proposition}
The iterated tangent group $TTG$ is a 2-cocycle extension of $G\ltimes(\G{g}_1\times \G{g}_2)$ by $\G{g}_3$, where $\G{g}_1= \G{g}_2=\G{g}_3=\G{g}$. In short, $TTG \cong \big(G \ltimes (\G{g}_1\times \G{g}_2) \big)\ltimes_\phi \G{g}_3$.
\end{proposition}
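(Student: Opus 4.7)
The plan is to read off both the action and the cocycle directly from \eqref{GrTTG}. The first three slots of the product reproduce the semi-direct product $G \ltimes (\G{g}_1\times\G{g}_2)$, with $G$ acting on $\G{g}_1\times\G{g}_2$ diagonally by $\Ad$, while the fourth slot receives three summands: $\tilde{\xi}^{(3)}$, the adjoint translate $\Ad_{\tilde{g}^{-1}}\xi^{(3)}$, and the bracket term $[\Ad_{\tilde{g}^{-1}}\xi^{(2)},\tilde{\xi}^{(1)}]$. The first two will be absorbed into the action on $\G{g}_3$, while the last is the cocycle contribution.

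Accordingly, I would set the right action
\[
\lt : \G{g}_3 \times \big(G \ltimes (\G{g}_1\times\G{g}_2)\big) \to \G{g}_3, \qquad \xi^{(3)}\lt (g,\xi^{(1)},\xi^{(2)}) := \Ad_{g^{-1}}\xi^{(3)},
\]
and the candidate 2-cocycle
\[
\phi\big((g,\xi^{(1)},\xi^{(2)}),(\tilde{g},\tilde{\xi}^{(1)},\tilde{\xi}^{(2)})\big) := [\Ad_{\tilde{g}^{-1}}\xi^{(2)},\tilde{\xi}^{(1)}].
\]
That $\lt$ is a well-defined right action follows from $\Ad_{(g\tilde{g})^{-1}} = \Ad_{\tilde{g}^{-1}}\Ad_{g^{-1}}$, keeping in mind that $\G{g}_3$ is abelian and that the $(\xi^{(1)},\xi^{(2)})$ coordinates do not influence $\lt$. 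Plugging $\lt$ and $\phi$ into the generic 2-cocycle extension product recovers \eqref{GrTTG} verbatim, so the set-theoretic identification $TTG \cong \big(G \ltimes (\G{g}_1\times\G{g}_2)\big)\times \G{g}_3$ upgrades to a group isomorphism as soon as $\phi$ is known to satisfy the cocycle identity.

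The remaining step is thus to verify
\[
\phi(b,c) - \phi(ab,c) + \phi(a,bc) - \phi(a,b)\lt c = 0
\]
for $a = (g,\xi^{(1)},\xi^{(2)})$, $b = (\tilde{g},\tilde{\xi}^{(1)},\tilde{\xi}^{(2)})$, $c = (\tilde{\tilde{g}},\tilde{\tilde{\xi}}^{(1)},\tilde{\tilde{\xi}}^{(2)})$. Expanding, the second entry of $ab$ yields $\tilde{\xi}^{(2)}+\Ad_{\tilde{g}^{-1}}\xi^{(2)}$, and the first entry of $bc$ yields $\tilde{\tilde{\xi}}^{(1)}+\Ad_{\tilde{\tilde{g}}^{-1}}\tilde{\xi}^{(1)}$; bilinearity of the bracket combined with the adjoint composition rule $\Ad_{(\tilde{g}\tilde{\tilde{g}})^{-1}} = \Ad_{\tilde{\tilde{g}}^{-1}}\Ad_{\tilde{g}^{-1}}$ then makes all four terms cancel in pairs. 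There is essentially no structural obstacle; the only pitfalls are respecting the right-action convention (the inverse in $\Ad_{g^{-1}}$) and correctly splitting the three fourth-slot summands of \eqref{GrTTG} between the action part and the cocycle part --- attributing either $\tilde{\xi}^{(3)}$ or $\Ad_{\tilde{g}^{-1}}\xi^{(3)}$ to $\phi$ would break the cocycle equation.
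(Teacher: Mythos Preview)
Your proposal is correct and follows essentially the same route as the paper: the same diagonal $\Ad$-action of $G$ on $\G{g}_1\times\G{g}_2$, the same right action $\xi^{(3)}\lt(g,\xi^{(1)},\xi^{(2)})=\Ad_{g^{-1}}\xi^{(3)}$, and the same cocycle $\phi\big((g,\xi^{(1)},\xi^{(2)}),(\tilde{g},\tilde{\xi}^{(1)},\tilde{\xi}^{(2)})\big)=[\Ad_{\tilde{g}^{-1}}\xi^{(2)},\tilde{\xi}^{(1)}]$ are defined, and the cocycle identity is checked term by term. The only cosmetic difference is that the paper writes out all four summands of $d\phi$ explicitly before observing the cancellation, whereas you describe the cancellation mechanism in words; both arguments are the same in substance.
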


\begin{proof}
The semi-direct product group $G \ltimes (\G{g}_1\times \G{g}_2)$ is based on the (right) diagonal adjoint action of $G$ on $\G{g}_1\times \G{g}_2$. More precisely,
\[
(\G{g}_1\times \G{g}_2) \times G \to (\G{g}_1\times \G{g}_2), \qquad( {\xi}^{(1)}, {\xi}^{(2)}) \lt g:= (\Ad_{g^{-1}}\xi^{(1)}, \Ad_{g^{-1}}\xi^{(2)}).
\]
On the other hand, the (right) action of the group $G \ltimes (\G{g}_1\times \G{g}_2)$ on $\G{g}_3$ is the one
\[
\G{g}_3 \times (G\ltimes (\G{g}_1 \times\G{g}_2))\to \G{g}_3, \qquad \xi^{(3)} \lt (g,\xi^{(1)},\xi^{(2)}) := \Ad_{g^{-1}}\xi^{(3)}.
\] 
Finally, the mapping
\begin{equation} \label{Coc-TTG}
\phi: \Big(G\ltimes (\G{g}_1 \times\G{g}_2) \Big) \times \Big(G\ltimes (\G{g}_1 \times\G{g}_2) \Big) \to \G{g}_3, \qquad \Big((g,\xi^{(1)},\xi^{(2)}),(\widetilde{g},\tilde{\xi}^{(1)},\tilde{\xi}^{(2)})\Big) \to \big[\Ad_{\widetilde{g}^{-1}}\xi^{(2)},\tilde{\xi}^{(1)}\big]
\end{equation}
is a 2-cocycle in the group cohomology of the semi-direct group $G\ltimes (\G{g}_1 \times\G{g}_2)$, with coefficients in $\G{g}_3$; namely $\phi \in H^2(G\ltimes (\G{g}_1 \times\G{g}_2), \G{g}_3)$. Indeed,
\begin{align*}
& \phi\Big((\widetilde{g},\tilde{\xi}^{(1)},\tilde{\xi}^{(2)}), (\widetilde{\widetilde{g}},\tilde{\tilde{\xi}}^{(1)},\tilde{\tilde{\xi}}^{(2)})\Big) - \phi\Big((g,\xi^{(1)},\xi^{(2)})(\widetilde{g},\tilde{\xi}^{(1)},\tilde{\xi}^{(2)}), (\widetilde{\widetilde{g}},\tilde{\tilde{\xi}}^{(1)},\tilde{\tilde{\xi}}^{(2)})\Big)  + \\
& \phi\Big((g,\xi^{(1)},\xi^{(2)}),(\widetilde{g},\tilde{\xi}^{(1)},\tilde{\xi}^{(2)}) (\widetilde{\widetilde{g}},\tilde{\tilde{\xi}}^{(1)},\tilde{\tilde{\xi}}^{(2)})\Big) -
	\phi\Big((g,\xi^{(1)},\xi^{(2)}),(\widetilde{g},\tilde{\xi}^{(1)},\tilde{\xi}^{(2)})\Big) \lt (\widetilde{\widetilde{g}},\tilde{\tilde{\xi}}^{(1)},\tilde{\tilde{\xi}}^{(2)}) =\\
&  \phi\Big((\widetilde{g},\tilde{\xi}^{(1)},\tilde{\xi}^{(2)}), (\widetilde{\widetilde{g}},\tilde{\tilde{\xi}}^{(1)},\tilde{\tilde{\xi}}^{(2)})\Big) - \phi\Big((g\widetilde{g},\tilde{\xi}^{(1)} + \Ad_{\widetilde{g}^{-1}}\xi^{(1)}, \tilde{\xi}^{(2)} + \Ad_{\widetilde{g}^{-1}}\xi^{(2)}), (\widetilde{\widetilde{g}},\tilde{\tilde{\xi}}^{(1)},\tilde{\tilde{\xi}}^{(2)})\Big) +\\
&  \phi\Big((g,\xi^{(1)},\xi^{(2)}),(\widetilde{g}\widetilde{\widetilde{g}}, \tilde{\tilde{\xi}}^{(1)} + \Ad_{\widetilde{\widetilde{g}}^{-1}} \tilde{\xi}^{(1)}, \tilde{\tilde{\xi}}^{(2)} + \Ad_{\widetilde{\widetilde{g}}^{-1}}\tilde{\xi}^{(2)}) \Big) -\\
&  \hspace{4cm}  \phi\Big((g,\xi^{(1)},\xi^{(2)}),(\widetilde{g},\tilde{\xi}^{(1)},\tilde{\xi}^{(2)})\Big) \lt (\widetilde{\widetilde{g}},\tilde{\tilde{\xi}}^{(1)},\tilde{\tilde{\xi}}^{(2)}) =\\
&\Big[\Ad_{\widetilde{\widetilde{g}}^{-1}}\tilde{\xi}^{(2)}, \tilde{\tilde{\xi}}^{(1)}\Big] - \Big[\Ad_{\widetilde{\widetilde{g}}^{-1}}(\tilde{\xi}^{(2)} + \Ad_{\widetilde{g}^{-1}}\xi^{(2)}), \tilde{\tilde{\xi}}^{(1)}\Big] + \\
& \hspace{4cm} \Big[\Ad_{(\widetilde{g}\widetilde{\widetilde{g})}^{-1}}\xi^{(2)}, \tilde{\tilde{\xi}}^{(1)} +\Ad_{\widetilde{\widetilde{g}}^{-1}} \tilde{\xi}^{(1)}\Big]  - \Ad_{\widetilde{\widetilde{g}}^{-1}} \Big[\Ad_{\widetilde{g}^{-1}}\xi^{(2)},\tilde{\xi}^{(1)}\Big]  = 0.
\end{align*}
\end{proof}

We now conclude the present subsection with a second presentation of the iterated tangent group.

\begin{proposition}
The iterated tangent group $TTG$ is a semi-direct product of $G$ with 
\[
TT_eG := \{(e,\xi^{(1)};\xi^{(2)},\xi^{(3)}) \mid \xi^{(1)},\xi^{(2)},\xi^{(3)} \in \G{g}\} \subseteq TTG 
\]
given by
\[
\left( \xi^{(1)},\xi^{(2)},\xi^{(3)}\right) \left(
\tilde{\xi}^{(1)},\tilde{\xi}^{(2)},\tilde{\xi}^{(3)}
\right)  =\left( \tilde{\xi}^{(1)}+\xi^{(1)},\tilde{\xi}^{(2)}+\xi^{(2)},\tilde{\xi}^{(3)}+\xi^{(3)}+[\xi^{(2)},\tilde{\xi}^{(1)}]\right).
\]
More precisely,
\[
TTG \cong G\ltimes TT_eG.
\]
\end{proposition}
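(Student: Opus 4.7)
The plan is to apply \cite[Prop. 6.2.15]{Majid-book}, exactly as in the proofs of Propositions~\ref{prop-msotg} and \ref{prop-TTG-g-T2G}, and then to observe that one of the two mutual actions in the resulting matched pair is trivial, so that the double cross product collapses to a semi-direct product.

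First I would verify that both $G \hookrightarrow TTG$, via $g\mapsto (g,0,0,0)$, and $TT_eG \hookrightarrow TTG$ are closed Lie subgroups. The former is immediate from \eqref{GrTTG}, since $\Ad_{\widetilde{g}^{-1}}0 = 0$. For the latter, the multiplication quoted in the statement is nothing but \eqref{GrTTG} specialized to $g = \widetilde{g} = e$, so that $\Ad_{\widetilde{g}^{-1}} = \Id$; closure under inversion then gives $(\xi^{(1)},\xi^{(2)},\xi^{(3)})^{-1} = (-\xi^{(1)}, -\xi^{(2)}, -\xi^{(3)} + [\xi^{(2)}, \xi^{(1)}])$. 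Clearly $G \cap TT_eG = \{(e,0,0,0)\}$.

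Next, reading from \eqref{GrTTG} the identity
\[
(g,0,0,0)\cdot(e,\xi^{(1)},\xi^{(2)},\xi^{(3)}) = (g,\xi^{(1)},\xi^{(2)},\xi^{(3)})
\]
exhibits the multiplication map $G \times TT_eG \to TTG$ as a set-theoretic bijection. By \cite[Prop. 6.2.15]{Majid-book}, this promotes $(G, TT_eG)$ to a matched pair of Lie groups and gives $TTG \cong G \bowtie TT_eG$ at the level of Lie groups.

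To upgrade this to a semi-direct product in the sense of Section~\ref{Sec-mplg}, it remains to show that the left action $\rt : TT_eG \times G \to G$ is trivial. Via \eqref{mutual-actions}, this is read off from the reverse-order product
\[
(e,\xi^{(1)},\xi^{(2)},\xi^{(3)})(g,0,0,0) = (g,\Ad_{g^{-1}}\xi^{(1)}, \Ad_{g^{-1}}\xi^{(2)}, \Ad_{g^{-1}}\xi^{(3)}) = (g,0,0,0)\cdot(e, \Ad_{g^{-1}}\xi^{(1)}, \Ad_{g^{-1}}\xi^{(2)}, \Ad_{g^{-1}}\xi^{(3)}),
\]
whence $(\xi^{(1)},\xi^{(2)},\xi^{(3)}) \rt g = g$ is indeed trivial, while the complementary right action of $G$ on $TT_eG$ is the diagonal $\Ad_{g^{-1}}$-action. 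The argument is otherwise entirely computational; the only point requiring care is aligning the convention of Section~\ref{Sec-mplg} (namely, that trivial left $H$-action on $G$ yields $K \cong G \ltimes H$) with the direction in which the two subgroups are being reordered in the product above.
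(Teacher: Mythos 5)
Your argument is correct and arrives at exactly the same structure the paper uses: the right diagonal adjoint action $(\xi^{(1)},\xi^{(2)},\xi^{(3)})\lt g=(\Ad_{g^{-1}}\xi^{(1)},\Ad_{g^{-1}}\xi^{(2)},\Ad_{g^{-1}}\xi^{(3)})$ of $G$ on $TT_eG$, together with the factorization of \eqref{GrTTG} as the corresponding semi-direct product. The only cosmetic difference is that you route through \cite[Prop. 6.2.15]{Majid-book} and the triviality of the left $TT_eG$-action, whereas the paper verifies the semi-direct product structure directly; the computations are the same and all of yours check out.
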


\begin{proof}
Considering the (right diagonal adjoint) action of $G$ on $TT_eG$ given by
\[
TT_eG  \times G\to TT_eG, \qquad (\xi^{(1)},\xi^{(2)},\xi^{(3)}) \lt g := (\Ad_{g^{-1}}\xi^{(1)}, \Ad_{g^{-1}}\xi^{(2)},\Ad_{g^{-1}}\xi^{(3)}),
\]
it becomes a straightforward verification that the group structure on $TTG$ may be recasted as the semi-direct product as claimed. 
\end{proof}

In fact, the subgroup $TT_eG \subseteq TTG$ is itself a 2-cocycle extension built on the right action
\[
\G{g}_3 \times \big(\G{g}_1 \times \G{g}_2\big) \to \G{g}_3, \qquad \xi^{(3)} \lt (\tilde{\xi}^{(1)},\tilde{\xi}^{(2)}) \to  \xi^{(3)},
\] 
of the abelian (cartesian product) group $\G{g}_1\times \G{g}_2$ on another abelian group $\G{g}_3$, and the 2-cocycle $\chi\in H^2(\G{g}_1\times \G{g}_2,\G{g}_3)$ given by
\[
\chi: (\G{g}_1 \times \G{g}_2) \times (\G{g}_1 \times \G{g}_2) \to \G{g}_3, \qquad \chi\big((\xi^{(1)},\xi^{(2)}), (\tilde{\xi}^{(1)},\tilde{\xi}^{(2)})\big) = [\xi^{(2)},\tilde{\xi}^{(1)}].
\]	
In short, $TT_eG \cong (\G{g}_1 \times \G{g}_2) \ltimes_\chi \G{g}_3$.

\section{Lagrangian dynamics} \label{mpcont}

In the present section, wherein the main results of the paper lie, we study the 2nd order Lagrangian dynamics on the 2nd order tangent (double cross product) group. To this end, we begin with a brief review of the 1st order (matched) Euler-Lagrange equations in Subsection \ref{fold-sec}. It is this subsection in which we also present the 1st order (matched) Euler-Lagrange equations on the iterated tangent group without appealing to any variational calculus (and using only the matched pair theory). Then, in Subsection \ref{sold-sec}, we derive the 2nd order Euler-Lagrange equations on the 2nd order tangent group, merely by the reduction of the 1st order Lagrangian dynamics on the iterated tangent bundle to the 2nd order Lagrangian dynamics, under the symmetry of the Lie algebra of the base Lie group. Finally, we present the 2nd order Euler-Lagrange equations, in Subsection \ref{msoELe}, on the 2nd order tangent group of a double cross product group.

\subsection{First order Lagrangian dynamics}~\label{fold-sec}

In order to arrive at the equation of motion generated by the Lagrangian function(al) $\G{L}:TG\cong G\ltimes \G{g}\to \B{R}$, $\G{L}=\G{L}( g,\xi)$, we compute the variation of the action integral
\begin{equation}
\delta\int_{a}^{b}\G{L}\left( g,\xi\right) dt=\int_{a}^{b}\left(
\left\langle \frac{\delta\G{L}}{\delta g},\delta g\right\rangle
_{g}+\left\langle \frac{\delta\G{L}}{\delta\xi},\delta\xi\right\rangle
_{e}\right) dt.  \label{act}
\end{equation}
Applying the Hamilton's principle to the variations of the base (group)
component, and the reduced variational principle
\begin{equation}\label{rvp}
\delta\xi=\dot{\eta}+\left[ \xi,\eta\right]   
\end{equation}
to the fiber (Lie algebra) component, the (trivialized) Euler-Lagrange equation is computed to be
\begin{equation}\label{preeulerlagrange}
\frac{d}{dt}\frac{\delta\G{L}}{\delta\xi}=T_{e}^{\ast}L_{g}\frac{\delta
	\G{L}}{\delta g}-\ad_{\xi}^{\ast}\frac{\delta\G{L}}{\delta\xi}.
\end{equation}
On the other hand, in view of the (free and proper) left action of $G$ on $TG$, the reduced Euler-Poincar\'e equations generated by the reduced Lagrangian $\G{L}:\G{g}\to \B{R}$, $\G{L}=\G{L}(\xi)$ on the quotient space $TG/G \cong \G{g}$, as (trivial vector) bundles on a point, turn out to be
\begin{equation}\label{EPEq}
\frac{d}{dt}\frac{\delta \G{L}}{\delta\xi}=-\ad_{\xi}^{\ast}\frac{\delta \G{L}}{%
	\delta\xi}.   
\end{equation}
This procedure is called the Euler-Poincar\'{e} reduction. 

For a thorough discussion, we refer the reader to \cite{bou2009hamilton, colombo2011geometry, colombo2013optimal, engo2003partitioned, esen2015tulczyjew}. For the reduced
variational principle we refer to \cite{cendra2003variational, esen2014tulczyjew,MarsdenRatiu-book}, and for
the Lagrangian dynamics on semidirect products we refer to \cite{cendra1998lagrangian,
	n2001lagrangian, holm1998euler, MarsRatiWein84}. 

Next, we shall recall from \cite[Subsect. 3.2]{EsenSutl17} the expressions of \eqref{rvp} and \eqref{EPEq} in the presence of a double cross product group $G\bowtie H$. In this case, given a Lagrangian $\G{L}:(G\bowtie H)\ltimes (\G{g}\bowtie\G{h})$, $\G{L}=\G{L}(g,h,\xi,\eta)$, the Euler-Lagrange equations associated to it were computed to be 
\begin{align}\label{mEL}
\begin{split}
& \frac{d}{dt}\frac{\delta\mathfrak{L}}{\delta\xi}    =T_{e}^{\ast}%
L_{g}\left(  \frac{\delta\mathfrak{L}}{\delta g}\right)  \overset{\ast
}{\lt}h+T_{e}^{\ast}\sigma_{h}\left(  \frac{\delta
	\mathfrak{L}}{\delta h}\right)  -\ad_{\xi}^{\ast}\frac{\delta\mathfrak{L}%
}{\delta\xi}+\frac{\delta\mathfrak{L}}{\delta\xi}\overset{\ast}%
{\lt}\eta+\mathfrak{a}_{\eta}^{\ast}\frac{\delta\mathfrak{L}%
}{\delta\eta},\\
& \frac{d}{dt}\frac{\delta\mathfrak{L}}{\delta\eta}    =T_{e}^{\ast}%
L_{h}\left(  \frac{\delta\mathfrak{L}}{\delta h}\right)  -\ad_{\eta}^{\ast
}\frac{\delta\mathfrak{L}}{\delta\eta}-\xi\overset{\ast}{\rt
}\frac{\delta\mathfrak{L}}{\delta\eta}-\mathfrak{b}_{\xi}^{\ast}\frac
{\delta\mathfrak{L}}{\delta\xi},
\end{split}
\end{align}
where 
\[
\G{g}^\ast\times H \to \G{g}^\ast, \qquad (\mu,h) \mapsto \mu\overset{\ast}{\lt} h,
\]
given by 
\[
\Big\langle \mu\overset{\ast}{\lt} h, \xi \Big\rangle = \Big\langle \mu, h\rt \xi \Big\rangle
\]
is the right action of $H$ on $\G{g}^\ast$, and
\[
G\times \G{h}^\ast \to \G{h}^\ast, \qquad (g,\nu) \mapsto g\overset{\ast}{\rt} \nu,
\]
given by 
\[
\Big\langle g\overset{\ast}{\rt} \nu, \eta \Big\rangle = \Big\langle \nu, \eta\lt g \Big\rangle
\]
is the left action of $G$ on $\G{h}^\ast$. Moreover, $\G{a}_\eta:\G{h}^\ast\to \G{g}^\ast$ is the transpose of the mapping
\[
\G{a}_\eta:\G{g}\to\G{h}, \qquad \xi\mapsto \G{a}_\eta(\xi):=\eta\lt \xi,
\]
and similarly $\G{b}_\xi:\G{g}^\ast\to \G{h}^\ast$ is the transpose of 
\[
\G{b}_\xi:\G{h}\to\G{g}, \qquad \eta\mapsto \G{b}_\xi(\eta):=\eta\rt \xi.
\]
Finally, $\s_h:G\to H$ being the map defined by $\s_h(g):=h\lt g$, $T_e^\ast\s_h:T^\ast_hH \to \G{g}^\ast$ is its coadjoint lift. The symmetry of \eqref{mEL} under the (left) symmetry of $G\bowtie H$, thus, yields the matched Euler-Poincar\'e equations
\begin{align}
\begin{split}\label{mEP}
\frac{d}{dt}\frac{\delta\mathfrak{L}}{\delta\xi}  &  =-\ad_{\xi}^{\ast}%
\frac{\delta\mathfrak{L}}{\delta\xi}+\frac{\delta\mathfrak{L}}{\delta\xi
}\overset{\ast}{\lt}\eta+\mathfrak{a}_{\eta}^{\ast}\frac
{\delta\mathfrak{L}}{\delta\eta},\\
\frac{d}{dt}\frac{\delta\mathfrak{L}}{\delta\eta}  &  =-\ad_{\eta}^{\ast}%
\frac{\delta\mathfrak{L}}{\delta\eta}-\xi\overset{\ast}{\rt
}\frac{\delta\mathfrak{L}}{\delta\eta}-\mathfrak{b}_{\xi}^{\ast}\frac
{\delta\mathfrak{L}}{\delta\xi}.
\end{split}
\end{align}

As an application of \eqref{trTG} and \eqref{mEL}, we readily have the Euler-Lagrange equations on 
\[
TTG\cong T(G\ltimes \G{g}_1) \cong (G\ltimes \G{g}_1) \ltimes (\G{g}_2\ltimes \G{g}_3).
\]
Indeed, considering $h:=\xi^{(1)}$, $\xi:=\xi^{(2)}$, and $\eta:=\xi^{(3)}$, and keeping in mind that the left $\G{g}_1$-action on $G$ and the left $\G{g}_3$-action on $\G{g}_2$ are in this case trivial, and that $\G{g}_3$ is a trivial Lie algebra (being the Lie algebra of the abelian Lie group $\G{g}_1$), we arrive at the Euler-Lagrange equations
\begin{align}\label{UnEPTTG}
\begin{split}
& \frac{d}{dt}\frac{\delta\mathfrak{L}}{\delta\xi^{(2)}}    =T_{e}^{\ast}%
L_{g}\left(  \frac{\delta\mathfrak{L}}{\delta g}\right) - \ad_{\xi^{(1)}}^{\ast}\frac{\delta\mathfrak{L}%
}{\delta\xi^{(1)}} -\ad_{\xi^{(2)}}^{\ast}\frac{\delta\mathfrak{L}%
}{\delta\xi^{(2)}}-\ad_{\xi^{(3)}}^{\ast}\frac{\delta\mathfrak{L}%
}{\delta\xi^{(3)}},\\
& \frac{d}{dt}\frac{\delta\mathfrak{L}}{\delta\xi^{(3)}}    =  \frac{\delta\mathfrak{L}}{\delta \xi^{(1)}}  -\ad_{\xi^{(2)}}^{\ast
}\frac{\delta\mathfrak{L}}{\delta\xi^{(3)}}
\end{split}
\end{align}
generated by a Lagrangian $\G{L}=\G{L}(g,\xi^{(1)},\xi^{(2)},\xi^{(3)})$. Let us note that the equations \eqref{UnEPTTG} may also be expressed as 
\begin{align}\label{UnEPTTG-II}
\begin{split}
& \left(\frac{d}{dt} + \ad_{\xi^{(2)}}^\ast\right)\frac{\delta\mathfrak{L}}{\delta\xi^{(2)}}    =T_{e}^{\ast}%
L_{g}\left(  \frac{\delta\mathfrak{L}}{\delta g}\right) - \ad_{\xi^{(1)}}^{\ast}\frac{\delta\mathfrak{L}%
}{\delta\xi^{(1)}} -\ad_{\xi^{(3)}}^{\ast}\frac{\delta\mathfrak{L}%
}{\delta\xi^{(3)}},\\
& \left(\frac{d}{dt} + \ad^\ast_{\xi^{(2)}}\right)\frac{\delta\mathfrak{L}}{\delta\xi^{(3)}}    =  \frac{\delta\mathfrak{L}}{\delta \xi^{(1)}}  
\end{split}
\end{align}

As a result, after the reduction under the (left) action of $G\ltimes \G{g}_1$, the Euler-Poincar\'e equations generated by the reduced Lagrangian $\G{L}:\G{g}_2\ltimes \G{g}_3 \to \B{R}$, $\G{L}=\G{L}(\xi^{(2)},\xi^{(3)})$ appear to be
\begin{align}\label{EPgg}
\begin{split}
& \left(\frac{d}{dt} + \ad_{\xi^{(2)}}^\ast\right)\frac{\delta\mathfrak{L}}{\delta\xi^{(2)}}    =  -\ad_{\xi^{(3)}}^{\ast}\frac{\delta\mathfrak{L}%
}{\delta\xi^{(3)}},\\
& \left(\frac{d}{dt} + \ad^\ast_{\xi^{(2)}}\right)\frac{\delta\mathfrak{L}}{\delta\xi^{(3)}}    =   0.
\end{split}
\end{align}

\subsection{Second order Lagrangian dynamics}~ \label{sold-sec}

We shall now observe that it is indeed possible to capture the (2nd order) Euler-Lagrange equations on $T^2G \cong (G \ltimes \G{g}_2)\ltimes_\vp \G{g}_3$, and then the (2nd order) Euler-Poincar\'e equations on $G\backslash T^2G \cong \G{g}_2\ltimes \G{g}_3$, from the Euler-Lagrange equations \eqref{UnEPTTG-II} on $TTG \cong (G\ltimes \G{g}_1)\ltimes (\G{g}_2\ltimes \G{g}_3)$, without appealing to any variational calculus, for which we refer to \cite{abrunheiro2011cubic, colombo2011geometry, colombo2013optimal, colombo2014unified,gay2012invariant,gay2011higher}.

Indeed, it follows from Proposition \ref{prop-TTG-g-T2G} that the Euler-Lagrange equations on $TTG \cong G\backslash TTG$, associated to the reduced Lagrangian $\mathfrak{L}:T^2G\to \B{R}$, $\mathfrak{L} = \G{L}(g,\xi,\dot{\xi})$ may be obtained by setting $\xi^{(2)}:=\xi$, and $\xi^{(3)}:=\dot{\xi}$. Namely,
\begin{align}\label{UnEPTTG-III}
\begin{split}
& \left(\frac{d}{dt} + \ad_{\xi}^\ast\right)\frac{\delta\mathfrak{L}}{\delta\xi}    =T_{e}^{\ast}%
L_{g}\left(  \frac{\delta\mathfrak{L}}{\delta g}\right) -\ad_{\dot{\xi}}^{\ast}\frac{\delta\mathfrak{L}%
}{\delta\dot{\xi}},\\
& \left(\frac{d}{dt} + \ad^\ast_{\xi}\right)\frac{\delta\mathfrak{L}}{\delta\dot{\xi}}    =  0.  
\end{split}
\end{align}
Next, incorporating the second equation into the first one along the lines of
\begin{align}\label{idi}
\begin{split}
\ad_{\dot{\xi}}^{\ast}\frac{\delta \G{L}}{\delta\dot{\xi}} & =\frac{d}{dt}\left(
\ad_{\xi}^{\ast}\frac{\delta \G{L}}{\delta\dot{\xi}}\right) -\ad_{\xi}^{\ast}\frac{d}{dt}\left( \frac{\delta \G{L}}{\delta\dot{\xi}}\right)  \\
& =-\frac{d^{2}}{dt^{2}}\left( \frac{\delta \G{L}}{\delta\dot{\xi}}\right)
-\ad_{\xi}^{\ast}\frac{d}{dt}\left( \frac{\delta \G{L}}{\delta\dot{\xi}}\right) \\
& =-\left( \frac{d}{dt}+\ad_{\xi}^{\ast}\right) \frac{d}{dt}\left( \frac{\delta \G{L}}{\delta\dot{\xi}}\right),   
\end{split}
\end{align}
where we used the second equation of \eqref{UnEPTTG-II} in the second equation, we arrive at the second order Euler-Lagrange equations 
\begin{equation}\label{soEL}
\left(\frac{d}{dt} + \ad_{\xi}^\ast\right)\left(\frac{\delta\mathfrak{L}}{\delta\xi} - \frac{d}{dt} \frac{\delta\mathfrak{L}}{\delta\dot{\xi}} \right)  =T_{e}^{\ast}%
L_{g}\left(  \frac{\delta\mathfrak{L}}{\delta g}\right) 
\end{equation}
on $T^2G \cong (G\ltimes \G{g})\ltimes_\vp \G{g}$. Accordingly, one final reduction with respect to the (left) $G$-action renders the 2nd order Euler-Poincar\'e equations
\begin{equation}\label{soep}
\left(\frac{d}{dt} + \ad_{\xi}^\ast\right)\left(\frac{\delta\mathfrak{L}}{\delta\xi} - \frac{d}{dt} \frac{\delta\mathfrak{L}}{\delta\dot{\xi}} \right)  =0
\end{equation}
associated to the reduced Lagrangian $\mathfrak{L}:G\backslash T^2G \cong \G{g}\ltimes \G{g}\to \B{R}$, $\mathfrak{L} = \G{L}(\xi,\dot{\xi})$.

Let us summarize our discussions/arguments in the following diagram.
\[
\begin{tikzcd}[column sep={12em,between origins},row sep=3em] 
& & {\begin{array}{c}TG \\ {\footnotesize \cong G\ltimes \G{g}_2} \\ {\footnotesize \text{1st order EL in (\ref{preeulerlagrange})}} \end{array}} \arrow[rd, "{G\backslash}"]\arrow[dll, "{\text{using \eqref{mEL}}}"'] & &\\
{\begin{array}{c}TTG  \\ {\footnotesize \cong (G\ltimes \G{g}_1)\ltimes (\G{g}_2\ltimes \G{g}_3)} \\ {\footnotesize \text{EL in (\ref{UnEPTTG-II})}}\end{array}}
\arrow[r,"{\G{g}_1\backslash}"] &
{\begin{array}{c} T^2G \\ {\footnotesize \cong (G\ltimes \G{g}_2})\ltimes_\vp \G{g}_3 \\ {\footnotesize \text{2nd order EL in (\ref{soEL})}}\end{array}} \arrow[ru, "{/\G{g}_3}"] \arrow[rd, "{G\backslash}"] 
& &{\begin{array}{c} \G{g}_2 \\ {\footnotesize \text{1st order EP in (\ref{EPEq})}}\end{array}}	 \\ 
& & 	{\begin{array}{c}\G{g}_2\ltimes \G{g}_3 \\\footnotesize{\text{2nd order EP in (\ref{soep})}}\end{array}}\arrow[ru, "{/\G{g}_3}"] &
\end{tikzcd}
\]

\subsection{The 2nd order matched Euler-Lagrange equations}~\label{msoELe} 

We shall now re-investigate the 2nd order Euler-Lagrange equations \eqref{soEL}, and the 2nd order Euler-Poincar\'e equations \eqref{soep} in the presence of a matched pair of groups $(G,H)$.

\begin{proposition} \label{Prop-mEL-2nd}
The 2nd order Euler-Lagrange equations on $T^2(G\bowtie H)$, generated by a Lagrangian function $\G{L}:T^2(G\bowtie H)\to \B{R}$, $\G{L}=\G{L}(g,h,\xi,\eta,\dot{\xi},\dot{\eta})$, may be given by
	\begin{equation} \label{mEL-2nd}
	\begin{split}
	\left(\frac{d}{dt}+\ad_{\xi}^* \right)\left(D_{\xi}\G{L}\right)-\left(D_{\xi}\G{L}\right)\overset{\ast }{\vartriangleleft} \eta-\mathfrak{a}_{\eta}^*\left(D_{\eta}\G{L} \right)&=T_{e}^*L_g \left( \frac{\delta \G{L}}{\delta g} \right)\overset{\ast }{\vartriangleleft}h +T_{e}^*\sigma_h\left(\frac{\delta \G{L}}{\delta h} \right), 
	\\
	\left(\frac{d}{dt}+\ad_{\eta}^* \right)\left(D_{\eta}\G{L}  \right) + \xi \overset{\ast }{\vartriangleright}\left(D_{\eta}\G{L}   \right) + \mathfrak{b}_{\xi}^*\left( D_{\xi}\G{L}   \right)&=T_{e}^*L_h \left( \frac{\delta \G{L}}{\delta h} \right)
	\end{split}
	\end{equation}
where we use the abbreviations
\[
D_{\xi}\G{L}:=\frac{\delta \G{L}}{\delta \xi}-\frac{d}{dt}\frac{\delta \G{L}}{\delta \dot{\xi}}, \qquad D_{\eta}\G{L}:=\frac{\delta \G{L}}{\delta \eta}-\frac{d}{dt}\frac{\delta \G{L}}{\delta \dot{\eta}}.
\]
\end{proposition}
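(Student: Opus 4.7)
My plan is to mirror the three-step derivation of the unmatched 2nd order Euler--Lagrange equations \eqref{soEL} given in Subsection \ref{sold-sec}, applied with $K := G\bowtie H$ in place of $G$. The motivating observation is that \eqref{mEL-2nd} differs from the matched 1st order equations \eqref{mEL} only by the substitutions $\tfrac{\delta\G{L}}{\delta\xi}\mapsto D_\xi\G{L}$ and $\tfrac{\delta\G{L}}{\delta\eta}\mapsto D_\eta\G{L}$, exactly as \eqref{soEL} differs from the unmatched first-order equation \eqref{preeulerlagrange}. The same reduction-plus-combining argument should therefore produce \eqref{mEL-2nd}.

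First, I would apply the matched 1st order equations \eqref{mEL} to the iterated tangent group $TT(G\bowtie H)\cong T(T(G\bowtie H))$. Since the base group $T(G\bowtie H)\cong(G\bowtie H)\ltimes(\G{g}\bowtie\G{h})$ is a semidirect product whose reductive part carries the matched structure, the resulting 1st order system contains four equations, directly generalizing \eqref{UnEPTTG-II}: two ``principal'' equations for $\bigl(\tfrac{d}{dt}+\ad^*_{\xi^{(2)}}\bigr)\tfrac{\delta\G{L}}{\delta\xi^{(2)}}$ and $\bigl(\tfrac{d}{dt}+\ad^*_{\eta^{(2)}}\bigr)\tfrac{\delta\G{L}}{\delta\eta^{(2)}}$ involving the $(g,h)$-partials, the matched coupling operators $\overset{\ast}{\lt}$, $\overset{\ast}{\rt}$, $\G{a}^*$, $\G{b}^*$ inherited from \eqref{mEL}, and the $\ad^*$-partials in $\xi^{(1)},\eta^{(1)},\xi^{(3)},\eta^{(3)}$; and two ``Legendre'' equations for $\tfrac{\delta\G{L}}{\delta\xi^{(3)}}$ and $\tfrac{\delta\G{L}}{\delta\eta^{(3)}}$. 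Second, I would reduce by $\G{g}\bowtie\G{h}$, using the decomposition $TTK\cong\G{k}\bowtie T^2K$ of Proposition \ref{prop-TTG-g-T2G} applied to $K=G\bowtie H$; this kills the partials with respect to $\xi^{(1)}$ and $\eta^{(1)}$. Third, under the identifications $\xi^{(2)}=\xi$, $\eta^{(2)}=\eta$, $\xi^{(3)}=\dot\xi$, $\eta^{(3)}=\dot\eta$, I would apply the combining identity \eqref{idi} componentwise to absorb each $\ad^*_{\dot\bullet}\tfrac{\delta\G{L}}{\delta\dot\bullet}$ in the principal equations into $-\bigl(\tfrac{d}{dt}+\ad^*_\bullet\bigr)\tfrac{d}{dt}\tfrac{\delta\G{L}}{\delta\dot\bullet}$, which produces \eqref{mEL-2nd}.

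The main obstacle will be the bookkeeping in the third step. In the unmatched derivation of \eqref{soEL}, only a single $\ad^*$-term had to be rewritten via \eqref{idi}. Here, the principal equations after reduction will also contain the matched coupling terms $\tfrac{\delta\G{L}}{\delta\dot\xi}\overset{\ast}{\lt}\dot\eta$, $\G{a}^*_{\dot\eta}\tfrac{\delta\G{L}}{\delta\dot\eta}$, $\dot\xi\overset{\ast}{\rt}\tfrac{\delta\G{L}}{\delta\dot\eta}$, $\G{b}^*_{\dot\xi}\tfrac{\delta\G{L}}{\delta\dot\xi}$, and I must verify that the Legendre equations combine with each of them through the Leibniz rule in precisely the right way so that they collect into $D_\xi\G{L}\overset{\ast}{\lt}\eta$, $\G{a}^*_\eta D_\eta\G{L}$, and their $\eta$-side counterparts. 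Since the matched coupling maps are algebraic (not differential) in the Lie-algebra arguments and linear in the Lie-algebra-dual arguments, this verification is a direct (though tedious) computation requiring no new ingredient beyond \eqref{mEL} and \eqref{idi}.
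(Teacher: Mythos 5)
Your plan is workable and would arrive at \eqref{mEL-2nd}, but it is a genuinely different, and considerably heavier, route than the one the paper takes. The paper's proof is a two-line specialization: since \eqref{soEL} was established for an arbitrary Lie group, it is applied directly to the Lie group $K=G\bowtie H$, i.e.\ one substitutes $g\to(g,h)$, $\xi\to(\xi,\eta)$, $\dot{\xi}\to(\dot{\xi},\dot{\eta})$ to get
\[
\left(\frac{d}{dt}+\ad^*_{(\xi,\eta)}\right)\left(\left(\frac{\delta \G{L}}{\delta \xi},\frac{\delta \G{L}}{\delta \eta}\right)-\frac{d}{dt}\left(\frac{\delta \G{L}}{\delta \dot{\xi}},\frac{\delta \G{L}}{\delta \dot{\eta}}\right)\right)=T^*L_{(g,h)}\left(\frac{\delta \G{L}}{\delta g},\frac{\delta \G{L}}{\delta h}\right),
\]
and then expands the coadjoint action of the double cross sum $\G{g}\bowtie\G{h}$ and the coadjoint lift on $G\bowtie H$ into components using the formulas of \cite{EsenSutl17}; the terms $(D_\xi\G{L})\overset{\ast}{\lt}\eta$, $\G{a}^*_\eta(D_\eta\G{L})$, $\xi\overset{\ast}{\rt}(D_\eta\G{L})$, $\G{b}^*_\xi(D_\xi\G{L})$ are nothing but the components of $\ad^*_{(\xi,\eta)}$, as already visible in \eqref{mEP}. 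You instead propose to re-run the entire three-step derivation of Subsection \ref{sold-sec} at the component level, and the ``main obstacle'' you flag in your third step---checking that the Legendre equations absorb the coupling terms $\frac{\delta\G{L}}{\delta\dot{\xi}}\overset{\ast}{\lt}\dot{\eta}$, $\G{a}^*_{\dot{\eta}}\frac{\delta\G{L}}{\delta\dot{\eta}}$, etc.\ in the right way---is precisely the bookkeeping the paper's route avoids: the combining identity \eqref{idi} is a statement about an arbitrary Lie group, hence holds verbatim on $\G{g}\bowtie\G{h}$, and the componentwise Leibniz manipulations you would be doing by hand follow for free if one splits $\ad^*_{(\xi,\eta)}$ only at the very end. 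What your route buys is an independent component-level verification of the matched coadjoint formulas; what it costs is a long computation in place of a one-step specialization. If you do carry it out, perform the combination at the level of $K=G\bowtie H$ first and split into $\G{g}^*$- and $\G{h}^*$-components last.
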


\begin{proof}
Setting $g\to (g,h)$, $\xi \to (\xi,\eta)$, and $\dot{\xi}\to (\dot{\xi},\dot{\eta})$ in \eqref{soEL}, we readily get
\[
\left( \frac{d}{dt}+\ad_{(\xi,\eta)}^{\ast}\right) \bigg(\left( \frac{\delta L}{\delta \xi}, \frac{\delta L}{\delta \eta}\right)
	-\frac{d}{dt}\left( \frac{\delta L}{\delta\dot{\xi}}, \frac{\delta L}{\delta\dot{\eta}}\right) \bigg)
	=T^{\ast}L_{(g,h)}\left(\frac{\delta L}{\delta g},\frac{\delta L}{\delta h}\right).
\]
The claim, then, follows from the expression \cite[(3.6)]{EsenSutl17} of the coadjoint lift on a double cross product Lie group, and \cite[(3.6)]{EsenSutl17} of the coadjoint action on a double cross sum Lie algebra.
\end{proof}

The reduction by the (left) action of the double cross product group $G\bowtie H$ is now immediate, and yields the 2nd order matched Euler-Poincar\'e equations on the double cross sum Lie algebra 
\begin{equation}\label{2nd-order-mEP}
(\G{g}\bowtie \G{h})\ltimes (\dot{\G{g}}\times \dot{\G{h}}) \cong (\G{g}\ltimes \dot{\G{g}}) \bowtie (\G{h}\ltimes \dot{\G{h}}),
\end{equation}
which is induced by \eqref{21-to-12}.
 
\begin{corollary} \label{Prop-mEP-2nd}
The 2nd order matched Euler-Poincar\'{e} equations, generated by a Lagrangian function $\G{L}:(\G{g}\bowtie \G{h})\ltimes (\dot{\G{g}}\ltimes \dot{\G{h}}) \to \B{R}$, $\G{L}=\G{L}(\xi,\eta,\dot{\xi},\dot{\eta})$ are given by 
\begin{align} \label{mEP-2nd}
\begin{split}
&\left(\frac{d}{dt}+\ad_{\xi}^* \right)\left(D_\xi\G{L}  \right)-\left( D_\xi\G{L}\right)\overset{\ast }{\vartriangleleft} \eta-\mathfrak{a}_{\eta}^*\left(D_\eta\G{L}\right)=0,\\
&\left(\frac{d}{dt}+\ad_{\eta}^* \right)\left(D_\eta\G{L} \right) + \xi \overset{\ast }{\vartriangleright}\left(D_\eta\G{L} \right) + \mathfrak{b}_{\xi}^*\left( D_\xi\G{L}   \right)=0,
\end{split}
\end{align}
where, once again, we use the abbreviations
\[
D_{\xi}\G{L}:=\frac{\delta \G{L}}{\delta \xi}-\frac{d}{dt}\frac{\delta \G{L}}{\delta \dot{\xi}}, \qquad D_{\eta}\G{L}:=\frac{\delta \G{L}}{\delta \eta}-\frac{d}{dt}\frac{\delta \G{L}}{\delta \dot{\eta}}.
\]
\end{corollary}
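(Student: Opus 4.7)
The plan is to deduce \eqref{mEP-2nd} from the 2nd order matched Euler-Lagrange equations \eqref{mEL-2nd} of Proposition \ref{Prop-mEL-2nd} by the group-reduction with respect to the (left) $G\bowtie H$-action on $T^2(G\bowtie H)$, following exactly the same strategy that passes from \eqref{soEL} to \eqref{soep} in the single-group case. Concretely, the reduced Lagrangian $\G{L}:(\G{g}\bowtie \G{h})\ltimes (\dot{\G{g}}\ltimes \dot{\G{h}})\to \B{R}$ depends only on the fibre variables $(\xi,\eta,\dot{\xi},\dot{\eta})$ and, by definition of reduction, is $G\bowtie H$-invariant; equivalently, $\delta\G{L}/\delta g=0$ and $\delta\G{L}/\delta h=0$. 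Substituting these vanishings into \eqref{mEL-2nd} annihilates the three right-hand side terms $T_e^\ast L_g(\delta\G{L}/\delta g)\overset{\ast}{\vartriangleleft} h$, $T_e^\ast\sigma_h(\delta\G{L}/\delta h)$, and $T_e^\ast L_h(\delta\G{L}/\delta h)$, leaving precisely the system \eqref{mEP-2nd}.

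To justify the reduction, I would first observe that the (left) action of $G\bowtie H$ on itself lifts to a free and proper action on $T^2(G\bowtie H)$, so that the orbit space
\[
(G\bowtie H)\backslash T^2(G\bowtie H) \;\cong\; (\G{g}\bowtie \G{h})\ltimes (\dot{\G{g}}\ltimes \dot{\G{h}})
\]
is a smooth manifold, the identification being a consequence of the double cross product decomposition $T^2(G\bowtie H)\cong T^2G\bowtie T^2H$ from Proposition \ref{prop-msotg} together with the 2-cocycle extension \eqref{T2G-cocycle-ext}. Then I would check that all the operators on the left-hand sides of \eqref{mEL-2nd} — the time-derivatives inside $D_\xi\G{L}$ and $D_\eta\G{L}$, the coadjoints $\ad_\xi^\ast$ and $\ad_\eta^\ast$, the mutual actions $\overset{\ast}{\vartriangleleft}$ and $\overset{\ast}{\vartriangleright}$ on $\G{g}^\ast$ and $\G{h}^\ast$, and the transposes $\mathfrak{a}_\eta^\ast$ and $\mathfrak{b}_\xi^\ast$ — involve only Lie-algebra data and hence descend unchanged to the quotient.

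As a consistency check, I would re-derive \eqref{mEP-2nd} directly from the single-group 2nd order Euler-Poincar\'e equations \eqref{soep} applied to the group $G\bowtie H$, by substituting $\xi\to (\xi,\eta)$ and $\dot{\xi}\to (\dot{\xi},\dot{\eta})$ and then expanding $\ad^\ast_{(\xi,\eta)}$ via the double-cross-sum coadjoint formulas of \cite[(3.6)]{EsenSutl17}; this should yield the same system, since both routes are instances of one and the same reduction diagram with $G$ replaced by $G\bowtie H$. The only real obstacle is bookkeeping — correctly threading $D_\xi\G{L}$ and $D_\eta\G{L}$ through the coadjoint and mutual-action operators — but this is already handled inside the proof of Proposition \ref{Prop-mEL-2nd}, so no new substantive calculation is required.
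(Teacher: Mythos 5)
Your proposal is correct and follows essentially the same route as the paper, which obtains \eqref{mEP-2nd} by declaring the reduction of \eqref{mEL-2nd} under the left $G\bowtie H$-action ``immediate'': invariance of the reduced Lagrangian kills the $\delta\G{L}/\delta g$ and $\delta\G{L}/\delta h$ terms on the right-hand sides, leaving exactly the stated system. Your additional remarks on the quotient identification and the consistency check via \eqref{soep} are sound but go beyond what the paper records.
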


Particular instances of the system \eqref{mEP-2nd} are of wide interest in connection with the semi-direct product theory. More precisely, if the (right) action of $G$ on $H$ is assumed to be trivial, then $G\bowtie H = G\rtimes H$, and the 2nd order Euler-Poincar\'e equations on the semi-direct sum Lie algebra
\[
(\G{g}\rtimes \G{h})\ltimes (\dot{\G{g}}\times \dot{\G{h}}) \cong (\G{g}\ltimes \dot{\G{g}})\rtimes (\G{h}\ltimes \dot{\G{h}})
\]
follows at once from \eqref{mEP-2nd} as
\begin{equation} \label{mEP-2nd-sd-1}
\begin{split}
\left(\frac{d}{dt}+\ad_{\xi}^* \right)\left(D_\xi \G{L} \right)-\left( D_\xi \G{L}\right)\overset{\ast }{\vartriangleleft} \eta=0,
\\
\left(\frac{d}{dt}+\ad_{\eta}^* \right)\left(D_\eta \G{L}\right)+ \mathfrak{b}_{\xi}^*\left(  D_\xi \G{L}  \right)=0.
\end{split}
\end{equation}
If, on the other hand, the (left) action of $H$ on $G$ is trivial, then $G\bowtie H = G \ltimes H$, and \eqref{mEP-2nd} reduces this time to the 2nd order Euler-Poincar\'{e} equations on the semi-direct sum Lie algebra 
\[
(\G{g}\ltimes \G{h})\ltimes (\dot{\G{g}}\times \dot{\G{h}}) \cong (\G{g}\ltimes \dot{\G{g}})\ltimes (\G{h}\ltimes \dot{\G{h}})
\]
as
\begin{equation} \label{mEP-2nd-sd-2}
\begin{split}
\left(\frac{d}{dt}+\ad_{\xi}^* \right)\left(D_\xi\G{L}  \right)-\mathfrak{a}_{\eta}^*\left(D_\eta\G{L}\right)=0,
\\
\left(\frac{d}{dt}+\ad_{\eta}^* \right)\left(D_\eta\G{L} \right) + \xi \overset{\ast }{\vartriangleright}\left(D_\eta\G{L} \right)=0.
\end{split}
\end{equation}
If both of the actions are trivial then we arrive at the 2nd order dynamics on $\G{g}\times \dot{\G{g}}$. Namely,
\begin{equation} \label{mEP-2nd-d}
\begin{split}
\left(\frac{d}{dt}+\ad_{\xi}^* \right)\left(D_\xi\G{L}  \right)=0,
\\
\left(\frac{d}{dt}+\ad_{\eta}^* \right)\left(D_\eta\G{L}   \right)=0.
\end{split}
\end{equation}
The following diagram summarizes well the relations between the 1st order and the 2nd order Euler-Lagrange and Euler-Poincar\'e equations derived in Subsection \ref{fold-sec} and Subsection \ref{sold-sec}.
\begin{equation}  \label{diagramm-2}
\begin{tikzcd}[row sep=scriptsize, column sep=scriptsize]
& {\begin{array}{c}T^2G  \\ {\rm \,EL\, in\,} \eqref{soEL} \end{array}}\arrow[dl,"G\backslash"']  \arrow[dd,"/\dot{\G{g}}"',"\text{via } \eqref{T2G-cocycle-ext}"] & {\begin{array}{c}T^2(G\bowtie H)  \\ {\text{matched EL in }} \eqref{mEL-2nd} \end{array}} \arrow[dr,"(G\bowtie H)\backslash"] \arrow[dd,"/(\dot{\G{g}}\times \dot{\G{h}})"',"\text{via } \eqref{T2G-cocycle-ext}"] \arrow[l,"/T^2H"',"\text{via } \eqref{T2GH-matched}"] \\
{\begin{array}{c}\mathfrak{g}\ltimes \dot{\mathfrak{g}}\\ {\rm  \,EP\, in\,} \eqref{soep} \end{array}}  \arrow[dd,"/\dot{\G{g}}"'] & &&{\begin{array}{c}(\mathfrak{g}\bowtie \G{h})\ltimes (\dot{\mathfrak{g}}\times \dot{\mathfrak{h}}) \\ {\text{matched EP in }} \eqref{mEP-2nd} \end{array}} \arrow[lll,"/(\mathfrak{h}\times \dot{\mathfrak{h}})"',"\text{via }\eqref{2nd-order-mEP}"] \arrow[dd,"/(\dot{\G{g}}\times \dot{\G{h}})"]\\
&{\begin{array}{c}TG\\ {\rm EL\, in\,} \eqref{preeulerlagrange} \end{array}} \arrow[dl,"G\backslash"']  &  {\begin{array}{c}T(G\bowtie H)\\ {\text{matched EL in }} \eqref{mEL} \end{array}} \arrow[l, "/TH"',"\text{via } \eqref{TGH-TG-TH}"] \arrow[dr,"(G\bowtie H)\backslash"] \\
{\begin{array}{c}\mathfrak{g}\\ {\rm \,EP\, in\,} \eqref{EPEq} \end{array}}  & && {\begin{array}{c}\mathfrak{g}\bowtie \mathfrak{h}\\ {\text{matched EP in }} \eqref{mEP} \end{array}} \arrow[lll,"/\G{h}"'] \\
\end{tikzcd}
\end{equation}

\section{Illustrations}\label{sect-illustrations}

\subsection{Riemannian 2-splines}\label{subsect-2-splines}~

A class of Lie groups which may be (matched) paired by themselves, via the adjoint action, is given by the nilpotent groups of class 2; \cite{esen2017dinamic,Majid-book}.

In this subsection, we shall present the 2nd order matched Euler-Poincaré equations of the Riemannian 2-splines, see for instance \cite{gay2012invariant,gay2011higher}, on the 2nd order tangent group of a double cross product $G\bowtie G$, where $G$ is a nilpotent Lie group of class 2.

Let us note that in case $G$ is a nilpotent Lie group of class 2, then the Riemannian metric $\gamma$ on the Riemannian 2-splines on $G$ turns into a bi-invariant Riemannian metric \cite{gay2012invariant}. Accordingly, $G$ being a nilpotent group of class 2, let us fix a bi-invariant Riemannian metric $\bar{\gamma}$ on the double cross product group $G\bowtie G$, with the corresponding squared norm $||(\cdot,\cdot)||^2_{\G{g}\bowtie \G{g}}$ on the double cross sum Lie algebra $\G{g}\bowtie \G{g}$ of the group $G\bowtie G$.

Along the lines of \cite[Prop. 3.4]{gay2012invariant}, we shall consider the Lagrangian 
\begin{equation}
\bar{\G{L}}:T^{2}(G\bowtie G)\to \B{R}, \qquad \bar{\G{L}}\left((g,\tilde{g}),(\xi,\tilde{\xi}),(\dot{\xi},\dot{\tilde{\xi}})\right)=\frac{1}{2}\left|\left|\frac{D}{Dt}(\xi,\tilde{\xi})\right|\right|_{(g,\tilde{g})}^2,
\end{equation}
where $D/Dt$ stands for the covariant derivative with respect to time, which induces the reduced Lagrangian 
\[
\bar{\ell}:(\G{g}\bowtie \G{g})\ltimes (\G{g}\times \G{g})\to \B{R}, \qquad \bar{\ell}((\xi,\tilde{\xi}),(\dot{\xi},\dot{\tilde{\xi}}))=\frac{1}{2}\left|\left| (\dot{\xi},\dot{\tilde{\xi}})  \right|\right|_{\G{g}\bowtie\G{g}}^2.
\]
Accordingly, the 2nd order Euler-Poincaré equations \cite[(3.21)]{gay2012invariant} becomes
\begin{align}
\left(\frac{d}{dt}\pm\ad_{\xi}^*\right) \ddot{\xi}^\flat \mp\ad^*_{\tilde{\xi}}(\ddot{\xi}^\flat+\ddot{\tilde{\xi}}^\flat)&=0 \qquad \text{ or }\qquad \dddot{\xi}\mp \left[\xi,\ddot{\xi}\right]\pm \left[\tilde{\xi},\ddot{\xi}+\ddot{\tilde{\xi}}\right]=0, \label{Nilpotent-1}\\
\left(\frac{d}{dt}\pm\ad_{\tilde{\xi}}^*\right) \ddot{\tilde{\xi}}^\flat \pm\ad^*_{\xi}(\ddot{\xi}^\flat+\ddot{\tilde{\xi}}^\flat)&=0 \qquad \text{ or }\qquad \dddot{\tilde{\xi}}\mp \left[\tilde{\xi},\ddot{\tilde{\xi}}\right]\mp \left[\xi,\ddot{\xi}+\ddot{\tilde{\xi}}\right]=0, \label{Nilpotent-2}
\end{align}
where 
\[
\flat:\G{g}\to \G{g}^*, \qquad \xi \mapsto \xi^\flat
\]
is the (musical) isomorphism given by
\[
\left\langle \xi^\flat,\tilde{\xi} \right\rangle=\gamma(\xi,\tilde{\xi}),
\]
whose inverse 
\[
\#:\G{g}^*\to \G{g},\qquad \mu \mapsto \mu^{\#}
\]
may be given by 
\[
-\ad_{\xi}\eta=(\ad_\xi^*(\eta^\flat))^{\#}.
\] 
Let us finally record from \cite{esen2017dinamic} that in the presence of the double cross sum Lie algebra $\G{g}\bowtie \G{g}$, the coadjoint action may be formulated as 
\begin{equation}\label{coadjoint-action}
\ad^*_{(\xi,\tilde{\xi})}(\xi^\flat,\tilde{\xi}^\flat)=(\ad_{\xi}^* \xi^\flat-\ad^*_{\tilde{\xi}}(\xi^\flat+\tilde{\xi}^\flat), \ad_{\tilde{\xi}}^*\tilde{\xi}^{\flat}+\ad^*_{\xi}(\xi^\flat+\tilde{\xi}^\flat)).
\end{equation}

\subsection{The 2nd order $3D$ systems}~\label{subsect-2nd-order-3D}

In the present subsection, we shall examine the $2$nd order matched Euler-Poincaré equations on $(\mathbb{R}^3\bowtie \mathbb{R}^3)\ltimes (\mathbb{R}^3\times \mathbb{R}^3)$ generated by a reduced Lagrangian function 
\[
\ell:(\mathbb{R}^3\bowtie \mathbb{R}^3)\ltimes (\mathbb{R}^3\times \mathbb{R}^3)\to \B{R},\qquad \ell=\ell(X,Y,\dot{X},\dot{Y}).
\]
 In other words, write the equations \eqref{mEP-2nd} of Corollary (\ref{Prop-mEP-2nd}) on $(\mathbb{R}^3\bowtie \mathbb{R}^3)\ltimes (\mathbb{R}^3\ltimes\mathbb{R}^3)$ in view of the double cross sum decomposition 
\begin{equation}
\G{sl}(2,\mathbb{C})=\G{su}(2)\bowtie \G{K}\cong \mathbb{R}^3\bowtie \B{R}^3_{\mathbf{k}},
\end{equation}
associated to the Iwasawa decomposition $SU(2)\bowtie K$ of $SL(2,\B{C})$, where $\G{su}(2)$ is the algebra of the group $SU(2)$, and $\G{K}$ is the Lie algebra of the half-real form $K$ of $SU(2)$. Referring the reader to \cite{EsenSutl17} for further details of this double cross sum decomposition, let us fix the notation
\begin{equation}
\begin{split}
X,\dot{X}\in & \B{R}^3, \qquad Y,Y_1,Y_2,\dot{Y}\in \B{R}^3_{\mathbf{k}},\qquad\hspace{0.2 cm}  
\Phi\in  {\B{R}^3}^*, \qquad \Psi\in \G{K}^*\cong {\B{R}^3_{\mathbf{k}}}^*, \qquad \mathbf{k}=(0,0,1)\in \B{R}^3.  
\end{split}
\end{equation}
We shall identify the Lie algebra $\G{su}(2)$ with $\B{R}^3$ whose Lie algebra structure is given by the cross product, and its dual space $\G{su}^*(2)$ with $\B{R}^3$ (via the Euclidean dot product). The Lie algebra structure of $\G{K}\cong\B{R}^3_{\mathbf{k}}$, on the other hand, is determined by
\[
 [Y_1,Y_2]=\mathbf{k}\times (Y_1\times Y_2),
\]
where $\times$ stands for the cross product, and $\mathbf{k}$ denotes the unit vector $(0,0,1)$. Let us recall also that, with all these identifications at hand, the coadjoint actions of the Lie algebras $\mathbb{R}^3$ and $\B{R}^3_{\mathbf{k}}$ are given by
\begin{equation}
\ad^*:\B{R}^3\times \B{R}^3\to \B{R}^3, \qquad\hspace{0.2 cm} (X,\Phi)\mapsto \ad^*_X\Phi=X\times  \Phi, \label{ad-*-X} 
\end{equation} 
and
\begin{equation}
\ad^*:\B{R}^3_{\mathbf{k}}\times \B{R}^3_{\mathbf{k}}\to \B{R}^3_{\mathbf{k}}, \qquad\hspace{0.2 cm} (Y,\Psi)\mapsto \ad^*_Y\Psi=(\mathbf{k}\cdot Y)\Psi-(\Psi\cdot Y)\mathbf{k}, \label{ad-*-Y} 
\end{equation} 
while the mutual dual actions are 
\begin{align}
&\overset{\ast }{\lt}:\B{R}^3\times \B{R}^3_{\mathbf{k}}\to \B{R}^3, \qquad (\Phi,Y)\mapsto \Phi\overset{\ast }{\lt} Y=(Y\cdot \mathbf{k})\Phi-(\Phi\cdot \mathbf{k})Y \label{left-trng-*}\\
& \overset{\ast }{\rt}:\B{R}^3\times \B{R}^3_{\mathbf{k}}\to \B{R}^3_{\mathbf{k}}, \qquad (X,\Psi)\mapsto X\overset{\ast }{\rt} \Psi=\Psi\times X,\label{right-trng-*}\\
&\G{b}^*_X:\B{R}^3\to \B{R}^3_{\mathbf{k}}, \qquad \Phi \mapsto \G{b}_X^* \Phi=(\Phi\cdot \mathbf{k})X-(\Phi\cdot X)\mathbf{k},\label{b-X-*}\\
&\G{a}^*_Y:\B{R}^3_{\mathbf{k}}\to \B{R}^3, \qquad \Psi \mapsto \G{a}_Y^* \Psi=Y\times \Psi,\label{a-Y-*}
\end{align}
where we use the Euclidean dot product in \eqref{b-X-*}. Accordingly, the 2nd order (matched) Euler-Poincaré equations \eqref{mEP-2nd} appears to be 
\begin{align}
\left(\frac{d}{dt}-Y\cdot \mathbf{k}\right)D_X\ell+X\times D_X\ell-Y\times D_Y\ell+\left(D_X\ell\cdot\mathbf{k}\right)Y&=0,\label{1st-eq-R-3}\\
\left(\frac{d}{dt}+\mathbf{k}\cdot Y\right)D_Y\ell+D_Y\ell\times X+(D_X\ell\cdot \mathbf{k})X-(D_Y\ell\cdot Y+D_X\ell\cdot X)\mathbf{k}&=0,\label{2nd-eq-R-3} 
\end{align}
 where we use the identities
 \begin{equation}\label{id-R-3}
 \begin{split}
 D_X\ell&=\frac{\delta \ell}{\delta X}-\frac{d}{dt}\frac{\delta\ell}{\delta \dot{X}},\\
 D_Y\ell&=\frac{\delta \ell}{\delta Y}-\frac{d}{dt}\frac{\delta\ell}{\delta \dot{Y}}.
 \end{split}
 \end{equation}
 If, in particular, the left action of $K$ on $SU(2)$ is assumed to be trivial, then the equations of motion  \eqref{1st-eq-R-3} and \eqref{2nd-eq-R-3} turn out to be the 2nd order Euler-Poincaré equations
\begin{align*}
\frac{d}{dt}D_X\ell+X\times D_X\ell-Y\times D_Y\ell&=0,\\
\left(\frac{d}{dt}+\mathbf{k}\cdot Y\right)D_Y\ell+D_Y\ell\times X-(D_Y\ell\cdot Y)\mathbf{k}&=0, 
\end{align*}
on the semidirect sum $(\mathbb{R}^3\ltimes \mathbb{R}^3)\ltimes (\mathbb{R}^3\times \mathbb{R}^3)\cong (\mathbb{R}^3\ltimes \mathbb{R}^3)\ltimes (\mathbb{R}^3\ltimes \mathbb{R}^3)$.

If, on the other hand, the right action of $SU(2)$ on $K$ is assumed to be trivial, then the equations  \eqref{1st-eq-R-3} and \eqref{2nd-eq-R-3} transform into the 2nd order Euler-Poincaré equations
\begin{align*}
\left(\frac{d}{dt}-Y\cdot\mathbf{k}\right)D_X\ell+X\times D_X\ell+\left(D_X\ell\cdot\mathbf{k}\right)Y&=0,\\
\left(\frac{d}{dt}+\mathbf{k}\cdot Y\right)D_Y\ell+(D_X\ell\cdot\mathbf{k})X-(D_Y\ell\cdot Y+D_X\ell\cdot X)\mathbf{k}&=0,
\end{align*}
on  $(\mathbb{R}^3\rtimes \mathbb{R}^3)\ltimes (\mathbb{R}^3\times \mathbb{R}^3)\cong (\mathbb{R}^3\ltimes \mathbb{R}^3)\rtimes (\mathbb{R}^3\ltimes \mathbb{R}^3)$.

If both actions are trivial, then we arrive at the 2nd order Euler-Poincar\'e equations 
\begin{align*}
\frac{d}{dt}D_X\ell+X\times D_X\ell&=0,\\
\left(\frac{d}{dt}+\mathbf{k}\cdot Y\right)D_Y\ell-(D_Y\ell\cdot Y)\mathbf{k}&=0
\end{align*}
on $(\B{R}^3\times \B{R}^3)\ltimes (\B{R}^3\times \B{R}^3)$. Furthermore, for the reduced Lagrangian function 
\[
\ell:(\B{R}^3\times \B{R}^3)\ltimes (\B{R}^3\times \B{R}^3)\to \B{R}, \qquad \ell(X,Y,\dot{X},\dot{Y}):=\frac{1}{2}(X^2+Y^2+\dot{X}^2+\dot{Y}^2),
\]
for which we refer the reader to \cite{starostin2015equilibrium}, the last equations take the particular form of 
\begin{align}
\dddot{X}-(Y\cdot\mathbf{k}) \ddot{X}
+X\times\ddot{X}+(\ddot{X}\cdot X)\mathbf{k}-Y\times \ddot{Y}-\dot{X}+(Y\cdot\mathbf{k})X-(X\cdot \mathbf{k})Y
&=0,\\
\dddot{Y}+(\mathbf{k}\cdot Y)\ddot{Y}+\ddot{Y}\times X-(\ddot{Y}\cdot Y)\mathbf{k}+(\ddot{X}\cdot\mathbf{k})X-(\ddot{X}\cdot X)\mathbf{k}-\dot{Y}-(\mathbf{k}\cdot Y)Y-Y\times X+(Y\cdot Y)\mathbf{k}&=0,
\end{align}
where we used the abbreviations
\[
D_X\ell=X-\ddot{X},\qquad D_Y\ell=Y-\ddot{Y}.
\]

\section{Discussions}

We have studied the 2nd order Lagrangian dynamics on Lie groups. More precisely, we have observed that based on the double cross product decomposition \eqref{TTG-g-T2G} of $TTG$, the 2nd order Euler-Lagrange equations on $T^2G$ may be obtained from the 1st order Euler-Lagrange equations on $TTG$; without appealing to any variational calculus. Furthermore, we have also remarked that the (1st order) Euler-Lagrange equations on $TTG$ may be written directly from those on the tangent group of a double cross product group; simply by regarding $TG$ as the semi-direct product $G\ltimes \G{g}$. Finally, we have noted the 2nd order Euler-Lagrange equations on the 2nd order tangent group of a double cross product group.

As is well-known, a generalization of the classical Lagrangian dynamics is available on the Lie algebroid framework; see, for instance, \cite{de2005lagrangian,grabowska2006geometrical,libermann1996lie,martinez2001lagrangian,Wein96}. Moreover, in the Lie algebroid framework, it is also possible to study the higher order Lagrangian dyanmics as well; \cite{colombo2017second,jozwikowski2018higher,martinez2015higher}. Along the lines of the present paper, it is the matched pair interpretation of the Lie algebroid geometry of the higher order Lagrangian dynamics that we plan on investigating in a separate paper.

\section{Acknowledgement} 
This work is a part of the project "Matched pairs of Lagrangian and Hamiltonian Systems"  supported by T\"UB\.ITAK (the Scientific and Technological Research Council of Turkey) with the project number 117F426 pursued by OE and SS.

\bibliographystyle{plain}
\bibliography{references}{}

\end{document}